\def\qed{\hfill $\Box$}
\theoremstyle{definition}
\newtheorem{thm}{Theorem}[section]
\newtheorem{defi}{Definition}[section]
\newtheorem{prop}{Proposition}[section]
\newtheorem{rem}{Remark}[section]
\newtheorem{cor}{Corollary}[section]
\newtheorem*{pf}{{\it Proof}}
\title{
\bf A certain generalization of $q$-hypergeometric functions \\
and their related  monodromy preserving deformation II
}
\date{}
\author{Kanam Park\\ {\it Departmant of Mathematics, Graduate School of Science, Kobe University},\\{\it 1-1, Rokkodai, Nada-ku, Kobe} 657-8501, {\it Japan}\\Email: kpaku@math.kobe-u.ac.jp}
\begin{document}
\maketitle

\begin{abstract}
We define a nonlinear $q$-difference system $\mathcal{P}_{N,(M_-,M_+)}$ as monodromy preserving deformations of a certain linear equation.  
We study its relation to a series $\mathcal{F}_{N,M}$ defined as a certain generalization of $q$-hypergeometric functions.
\end{abstract}
{\it Keywords} : generalized $q$-hypergeometric function; 
$q$-Garnier system; 
$q$-differences; 
a linear Pfaffian systems.
\section{Introduction}
%

In the previous work \cite{park18},  we defined a series $\mathcal{F}_{N,M}$   
\begin{equation}
\mathcal{F}_{N,M} \Big({\{a_j\},\{ b_i\} \atop \{ c_j \}}; \{ y_i\}\Big)=\displaystyle\sum_{m_i\geq 0}\prod_{j=1}^N \cfrac{(a _j)_{|m|} }{(c_j)_{|m|}}\prod _{i=1}^M \cfrac{(b_i)_{m_i}}{(q)_{m_i}}\prod_{i=1}^M y_i ^{m_i},
\end{equation}
as a certain generalization of $q$-hypergeometric functions, 
where,  $(a)_n=\frac{(a)_{\infty}}{(q^n a)_{\infty}}$, $(a)_{\infty}=\prod_{i=0}^{\infty} (1-q^i a)$ and $0<|q|<1$. 
The aim of this paper is to study the system of $q$-difference non-linear equations which admits a particular solution in terms of the function $\mathcal{F}_{N,M}$.
This problem was solved in cases of $(N,M)=(1,M)$ in \cite{park18}, and we will achieve our goal for all $(N,M)$ in this paper. 
We remark that these results can be considered as a natural $q$-analog of Tsuda's results \cite{tsuda12}.
 
%
   The contents of this paper is as follows. 
   In the next section, we formulate a system of nonlinear $q$-difference equations $\mathcal{P}_{N,(M_-,M_+)}$ which is the monodromy preserving deformation we aim at.
   In section $3$, we review some facts on the series $\mathcal{F}_{N,M}$ from the previous work \cite{park18}. 
   In section 4, we derive a Pfaffian system which the series $\mathcal{F}_{N,M}$ from an integral representation of it. 
   In section 5, we construct the Pfaffian system which $\mathcal{F}_{N,M}$ satisfies and derive solution of the system $\mathcal{P}_{N+1,(M,M)}$. \section{A monodromy preserving deformation $\mathcal{P}_{N,(M_-,M_+)}$}\label{mdnm}
In this section, we define a monodromy preserving deformation $\mathcal{P}_{N,(M_-,M_+)}$.   
Its relation to a generalization of $q$-hypergeometric functions $\mathcal{F}_{N,M}$  will be given in the next section.
\begin{defi}
 We define nonlinear $q$-difference system $\mathcal{P}_{N,(\varepsilon_1,\varepsilon_2,\cdots, \varepsilon_M)}$, $\varepsilon_i=\pm 1$ as monodromy preserving deformations of the following  linear equation
\begin{equation}\label{lax}
\begin{array}{l}
\Psi (qz)=\Psi(z)A(z), \quad A(z)=DX_1^{\varepsilon_1}(z)X_2^{\varepsilon_2}(z)\cdots X_M^{\varepsilon_M}(z),\\
\end{array}
\end{equation} 
where  the matrices $D$ and $X_i^{\varepsilon_i}(z)$ stand for 
\begin{equation}
\begin{array}{c}
D=\mathrm{diag}[d_1,d_2,\cdots,d_N],\\
X_i(z)=\mathrm{diag} [u_{1,i},u_{2,i},\cdots u_{N,i}]+\Lambda, \vspace{6pt}\\
\Lambda=
\begin{bmatrix}
  &1&         &\\
  & &\ddots&\\
  & &          &1\\
z&  &&
\end{bmatrix}, \quad
\end{array}
\end{equation}
and $\{ u_{j,i}\}$ are dependent variables and $\{ d_j, c_i\} $ are  parameters such that
\begin{equation}
\prod_{j=1}^N u_{j,i}=c_i.
\end{equation} 
\end{defi}
As is shown by the following Proposition, except for the dependence on deformation direction, the equation \eqref{lax} 
system $\mathcal{P}_{N,(\varepsilon_1,\varepsilon_2,\cdots, \varepsilon_M)}$ 
essentialy depends only on  $M_-$, $M_+$, where $M_{\pm}=\# \{ \varepsilon_i | \varepsilon_i=\pm 1 \}$. Namely, $A(z)$ for different $\vec{\varepsilon}$ are equivalent 
if $M_-$, $M_+$ are the same. Therefore we sometimes use notation $\mathcal{P}_{N,(\varepsilon_1,\varepsilon_2,\cdots, \varepsilon_M)}=\mathcal{P}_{N,(M_-,M_+)}$. 
\begin{prop}
For any permutation $\sigma \in S_M$, there exists a unique set of variables $\{ u_{j,i}' \}$ such that the following equation holds 
\begin{equation}\label{xpmt}
\begin{array}{c}
X_1^{\varepsilon_1}(z)X_2^{\varepsilon_2}(z)\cdots X_M^{\varepsilon_M}(z)={X_{\sigma(1)}^{\varepsilon_{\sigma(1)}}}'{X_{\sigma(2)}^{\varepsilon_{\sigma(2)}}}'\cdots {X_{\sigma(M)}^{\varepsilon_{\sigma(M)}}}',\\
 {\rm det} X_k={\rm det} X_k'=z-c_k,
   \end{array}
\end{equation}
where the matrix $X_i'$ stands for a matrix $X_i$ whose variables $u_{j,i}$ are replaced by $u_{j,i}'$. 
\end{prop}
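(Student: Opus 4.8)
The plan is to reduce the general permutation to the case of a single transposition of adjacent factors, since every $\sigma \in S_M$ is a product of adjacent transpositions, and then to solve the local problem $X_i^{\varepsilon_i}(z) X_{i+1}^{\varepsilon_{i+1}}(z) = {X_{i+1}^{\varepsilon_{i+1}}}' \, {X_i^{\varepsilon_i}}'$ explicitly. Here the superscripts $\varepsilon$ are fixed throughout (the deformation direction is attached to the index, not the slot), so we genuinely need four sub-cases $(\varepsilon_i,\varepsilon_{i+1}) \in \{(+,+),(+,-),(-,+),(-,-)\}$; by taking inverses of the whole identity and relabelling, the $(-,-)$ case follows from $(+,+)$ and the $(-,+)$ case from $(+,-)$, so really only two computations are needed.

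First I would treat the $(+,+)$ case. Write $X_i(z) = U_i + \Lambda$ with $U_i = \mathrm{diag}[u_{1,i},\dots,u_{N,i}]$ and $\Lambda$ the fixed cyclic matrix above. The equation $(U_i+\Lambda)(U_{i+1}+\Lambda) = (U_{i+1}'+\Lambda)(U_i'+\Lambda)$ expands, after cancelling the common $\Lambda^2$ term, to
\begin{equation}\label{eq:localpf}
U_i U_{i+1} + U_i \Lambda + \Lambda U_{i+1} = U_{i+1}' U_i' + U_{i+1}' \Lambda + \Lambda U_i'.
\end{equation}
Now $U\Lambda$ and $\Lambda U$ are, respectively, $\Lambda$ with its superdiagonal/bottom-left entries scaled by a shift of the $u$'s; matching the diagonal part and the two nonzero off-diagonal bands of \eqref{eq:localpf} gives, for each $j$, the system
\begin{equation}
u_{j,i} u_{j,i+1} = u'_{j,i+1} u'_{j,i}, \qquad u_{j,i} + \theta(u_{j+1,i+1}) = u'_{j,i} + \theta(u'_{j+1,i+1}),
\end{equation}
where $\theta$ denotes the appropriate index/argument shift coming from $\Lambda$ acting on the $z$-dependent corner (indices mod $N$). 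This is a triangular recursion in $j$: the product relation fixes $u'_{j,i}$ in terms of $u'_{j,i+1}$, and the additive relation then propagates $u'_{j+1,i+1}$ from $u'_{j,i+1}$; one free normalization is pinned down precisely by the determinant constraint $\det X_i' = z - c_i$ (equivalently $\prod_j u'_{j,i} = c_i$), which is preserved because $\det$ is multiplicative and $\det(X_i X_{i+1})$ is symmetric in the two factors. Solving this recursion gives existence and, because at each step the relevant equation is linear with nonzero leading coefficient (genericity of the $u$'s, guaranteed away from a divisor), uniqueness. The mixed case $(+,-)$ is handled the same way after multiplying through by $X_{i+1}^{-1}$ on the appropriate side, using that $X_i^{-1}$ has a similar "diagonal plus one band" structure divided by $\det X_i = z-c_i$.

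The main obstacle I anticipate is bookkeeping rather than conceptual: keeping the cyclic shifts from $\Lambda$ (the $z$ in the corner) consistent across the $N$ coupled scalar equations, and checking that the one remaining degree of freedom in the recursion is exactly absorbed by the constraint $\prod_j u'_{j,i}=c_i$ so that the solution is genuinely \emph{unique} and not merely unique up to scaling. A secondary point to be careful about is the composition step: once adjacent transpositions are solved, one must check that iterating them to build an arbitrary $\sigma$ lands on a well-defined $\{u'_{j,i}\}$ independent of the chosen reduced word for $\sigma$ — this follows from uniqueness in the adjacent case (any two reduced words give the same product of matrices, hence by uniqueness the same $u'$), so no braid-relation verification is needed beyond invoking uniqueness.
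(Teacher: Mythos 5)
Your overall strategy --- reduce to adjacent transpositions, solve the local swap $X_i^{\varepsilon_i}X_{i+1}^{\varepsilon_{i+1}}={X_{i+1}^{\varepsilon_{i+1}}}'{X_i^{\varepsilon_i}}'$ explicitly, and deduce independence of the reduced word from uniqueness --- is exactly the paper's. The soft spot is your analysis of the local system. Matching entries in $(U_i+\Lambda)(U_{i+1}+\Lambda)=(U_{i+1}'+\Lambda)(U_i'+\Lambda)$ gives, with $j$ read mod $N$, the relations $u_{j,i}u_{j,i+1}=u'_{j,i}u'_{j,i+1}$ and $u_{j,i}+u_{j+1,i+1}=u'_{j,i+1}+u'_{j+1,i}$ (your displayed additive relation has the primed indices in the wrong slots: the right-hand side must reflect the swapped order of the factors). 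Eliminating $u'_{j,i}$ via the product relations turns the additive ones into a recursion $u'_{j+1,i+1}=P_{j+1}/(S_j-u'_{j,i+1})$ which is \emph{cyclic}, not triangular: closing the loop around $j=1,\dots,N$ is a fixed-point condition for a M\"obius transformation, i.e.\ a quadratic equation with generically two roots, corresponding to the two ways of distributing the roots $c_i$, $c_{i+1}$ of $\det(X_iX_{i+1})$ between the two new factors. So the residual freedom is a discrete two-fold choice, not a scaling, and your justification of uniqueness (``linear with nonzero leading coefficient'') does not apply as stated; what rescues the argument is precisely the constraint $\det X_k'=z-c_k$ in the statement, which rejects the root with the determinants interchanged. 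With that correction your proof closes.

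For comparison, the paper sidesteps this issue on both ends. For existence it simply exhibits the swap in closed form, $s_{k,l}^{\varepsilon_k,\varepsilon_l}(u_{j,k},u_{j,l})=(u_{j,k}Q_{j+1,k}/Q_{j,k},\,u_{j,l}Q_{j,k}/Q_{j+1,k})$ with explicit polynomials $Q$ for each sign pattern (e.g.\ $Q^{+,+}_{j,i}=\sum_{a=1}^N\prod_{k=1}^{a-1}u_{j+k,i}\prod_{k=a+1}^{N}u_{j+k,i+1}$), a form that manifestly preserves the products $\prod_j u_{j,\cdot}$ and hence the determinant normalization. For uniqueness it argues not entry by entry but globally: at $z=c_{\sigma(M)}$ the kernel of the left-hand side of \eqref{xpmt} must coincide with that of the right-hand side, and the kernel vector of a matrix of the form $U+\Lambda$ determines its diagonal entries; this pins down $X_{\sigma(M)}'$, after which one peels it off and inducts. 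That kernel argument is worth adopting in your write-up: it gives uniqueness for the whole product in one stroke and avoids the quadratic ambiguity entirely.
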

\begin{pf} 

We show the existence first. 
Define a transformation $s_{k,l}^{\varepsilon_k,\varepsilon_{l}}$:
\begin{equation}\label{gkn0}
s_{k,l}^{\varepsilon_k,\varepsilon_l}(u_{j,k},u_{j,l})=(u_{j,k}\frac{Q_{j+1,k}^{\varepsilon_k,\varepsilon_{l}}}{Q_{j,k}^{\varepsilon_k,\varepsilon_l}},u_{j,l}\frac{Q_{j,k}^{\varepsilon_k,\varepsilon_l}}{Q_{j+1,k}^{\varepsilon_k,\varepsilon_l}}),
\end{equation}
where $Q_{j,i}$ is polynomial in $u_{j,i}$, $u_{j,i+1}$ given by
\begin{equation}\label{gkn}
\begin{array}{l}
Q_{j,i}^{+,+}=\displaystyle\sum_{a=1}^N (\prod_{k=1}^{a-1} u_{j+k,i} \prod_{k=a+1}^N u_{j+k,i+1}),\\
 Q_{j,i}^{-,-}=\left.\displaystyle\frac{1}{Q_{j,i}}\right|_{i \leftrightarrow i+1},\vspace{6pt}\\
Q_{j,i}^{+,-}=u_{j,i}-u_{j,i+1},\\
Q_{j,i}^{-,+}=(u_{j-1,i}-u_{j-1,i+1})^{-1}.
\end{array}
\end{equation}
Such defined a birational mapping $s_{k,l}^{\varepsilon_k,\varepsilon_l}$ satisfies 
$s_{k,l}^{\varepsilon_k,\varepsilon_l}(X_k^{\varepsilon_k}X_l^{\varepsilon_l})
={X_l^{\varepsilon_l}}' {X_k^{\varepsilon_k}}'$. 
For any permutation $\sigma \in S_M$,  
by composing  the equations in \eqref{gkn0}, we obtain a birational mapping 
 which satisfies \eqref{xpmt}. 

To show the uniqueness, we note that the kernel of the left hand side of the first equation \eqref{xpmt} at $z=c_{\sigma(M)}$ should be equal to that of the right hand side. This condition determine the matrix $X_{\sigma(M)}^{\varepsilon_{\sigma(M)}}$ uniquely. 
\qed
\end{pf}
\begin{rem}\label{gar}
It is easily shown that the system $\mathcal{P}_{2,(0,2M)}$ and $\mathcal{P}_{2N+2,(0,2)}$  are  respectively equivalent to the $2(M-1)$ dimensional $q$-Garnier system  \cite{sakai05} and the $q$-$P_{(N+1,N+1)}$ \cite{suzuki15} in the sense that the matrix $A$ in \eqref{lax} is equivalent to the matrix $A$ given in \cite{ny17}. 
\end{rem}
\begin{rem}
The equation \eqref{lax} for the system $\mathcal{P}_{N,(M_-,M_+)}$ is an extended  version of the system in  \cite{kny02} \cite{or16} obtained as the $N$-reduced $q$-KP hierarchy through a similarity condition. 
\end{rem}
  \section{The series  $\mathcal{F}_{N,M}$ as an extension of $q$-hypergeometric functions}\label{part1}
  In this subsection, we will recall some results in \cite{park18} which are needed in the following subsections. 
    \label{fmn}\begin{defi} (\cite{park18}, Definition 2.1) We define a series $\mathcal{F}_{N,M}$ as
  \begin{gather} \label{fmn}
\mathcal{F}_{N,M} \Big({\{a_j\},\{ b_i\} \atop \{ c_j \}}; \{ y_i\}\Big)=\displaystyle\sum_{m_i\geq 0}\prod_{j=1}^N \cfrac{(a _j)_{|m|} }{(c_j)_{|m|}}\prod _{i=1}^M \cfrac{(b_i)_{m_i}}{(q)_{m_i}}\prod_{i=1}^M y_i ^{m_i},
\end{gather}
  where
   $(a)_n=\frac{(a)_{\infty}}{(q^n a)_{\infty}}$ and $0<|q|<1$. Here and in what follows the symbol $(a)_{\infty}$ means $(a)_{\infty}=\prod_{i=0}^{\infty} (1-q^i a)$. 
The series \eqref{fmn} converges in the region $|y_i|<1$ and  is continued analytically  to $|y_i|\geq1$.
  \end{defi}
 When $N=1$ or $M=1$, the series \eqref{fmn} is equal to the $q$-Appell-Lauricella function $\varphi_D$ or the generalized $q$-hypergeometric function ${}_{N+1} \varphi _N$, respectively \cite{gr04}:
\begin{equation}
\begin{array}{ll}
\mathcal{F}_{1,M} \displaystyle \left( {a,\{ b_i\} \atop c} ;\{ y_i \} \right)&=\displaystyle\sum_{m_i \geq 0} \frac{(a)_{|m|}}{(c)_{|m|}} \prod_{i=1}^M \frac{(b_i)_{m_i}}{(q)_{m_i}} y_i ^{m_i}
=\displaystyle\varphi_D \left( {a , \{b_i \} \atop c };\{ y_i \} \right),
\end{array}
\end{equation}
\begin{equation}
\begin{array}{ll}
\displaystyle \mathcal{F}_{N,1} \left( {\{a_j \} , b \atop \{c_j \} };y \right) &=\displaystyle\sum_{n \geq 0} \prod_{j=1}^N \frac{(a_j)_{m}(b)_{m}}{(c)_{m} (q)_{m}} y^{m} =\displaystyle{}_{N+1} \varphi_N \left( {\{a_j \} , b \atop \{ c_j \} }; y \right).
\end{array}
\end{equation}
There is a duality relation between the series $\mathcal{F}_{N,M}$ and $\mathcal{F}_{M,N}$ as follows:
\begin{prop}(\cite{park18}, Proposition 2.1)\label{dual}
The series $\mathcal{F}_{N,M}$ satisfies the relation
\begin{equation}\label{taishou}
\mathcal{F}_{N,M} \left( {\{y_j \} ,\{a_i \} \atop \{b_jy_j \}};\{x_i \} \right) =\prod_{j=1}^N \frac{(y_j)_{\infty}}{(b_jy_j)_{\infty} } \prod_{i=1}^M \frac{(a_i x_i )_{\infty}}{(x_i)_{\infty}} \mathcal{F}_{M,N} \left( {\{x_i \} , \{b_j \} \atop \{a_i x_i \} };\{y_j \} \right).
\end{equation}
\end{prop}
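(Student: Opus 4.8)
The plan is to prove the identity \eqref{taishou} by a direct manipulation of the defining double series \eqref{fmn}, reducing everything to the single elementary fact that
\[
(a)_n = \frac{(a)_\infty}{(q^n a)_\infty}, \qquad (a)_{n} = (a)_\infty / (q^n a)_\infty,
\]
together with the shift formula $(q^n a)_\infty = (a)_\infty/(a)_n$. First I would write out the left-hand side of \eqref{taishou} as
\[
\sum_{m_i \ge 0} \prod_{j=1}^N \frac{(y_j)_{|m|}}{(b_j y_j)_{|m|}} \prod_{i=1}^M \frac{(a_i)_{m_i}}{(q)_{m_i}} x_i^{m_i},
\]
and the right-hand side, after pulling the infinite products inside, as
\[
\sum_{n_j \ge 0} \prod_{j=1}^N \frac{(y_j)_\infty}{(b_j y_j)_\infty} \prod_{i=1}^M \frac{(a_i x_i)_\infty}{(x_i)_\infty}\cdot \prod_{i=1}^M \frac{(x_i)_{|n|}}{(a_i x_i)_{|n|}} \prod_{j=1}^N \frac{(b_j)_{n_j}}{(q)_{n_j}} y_j^{n_j}.
\]
The key structural observation is that in $\mathcal{F}_{N,M}$ the "numerator" Pochhammer symbols indexed by $\{a_j\}$ carry the total degree $|m|$, while those indexed by $\{b_i\}$ carry the individual degrees $m_i$; under the duality the roles of the two index blocks, and of the total-degree versus individual-degree dependence, are exchanged. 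So I expect the proof to run by expanding each factor of the form $(a_i)_{m_i}$ and $(y_j)_{|m|}^{-1}$ using $(\cdot)_k = (\cdot)_\infty/(q^k \cdot)_\infty$, and recognizing that the resulting expression, now a sum over the lattice $\{m_i\}$, can be re-summed as a sum over a dual lattice $\{n_j\}$ after collecting the terms according to the new total degree.

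The concrete mechanism I would use is a $q$-binomial / $q$-Vandermonde type rearrangement: introduce the substitution that relates the summation index $|m|$ on one side to an auxiliary summation on the other, so that each $q$-shifted infinite product $(q^{|m|}\,\cdot)_\infty$ appearing after expansion is itself re-expanded as a $q$-series, and then the double (or multiple) sum is reorganized. Concretely, one writes $(q^{m_i} a_i)_\infty^{-1} = \sum_{k_i \ge 0} \frac{a_i^{k_i} q^{m_i k_i}}{(q)_{k_i}}$ by the $q$-binomial theorem, substitutes into the LHS, interchanges the order of summation, performs the inner sum over the $m_i$ (which becomes geometric-type in $x_i q^{k_i}$ after the dust settles, or is handled by another application of the $q$-binomial theorem in the variable controlling $|m|$), and identifies the outer sum over $\{k_i\}$ — suitably reindexed as $\{n_j\}$ — with the RHS times the prefactor $\prod (y_j)_\infty/(b_j y_j)_\infty \cdot \prod (a_i x_i)_\infty/(x_i)_\infty$. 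Since this is precisely the kind of identity proved in \cite{park18} (it is quoted here as Proposition \ref{dual} with that reference), I would in fact just cite the computation there; but the self-contained route is the series rearrangement just sketched.

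The main obstacle will be bookkeeping: keeping the multi-index structure straight while interchanging the several nested summations, and justifying those interchanges by absolute convergence in the polydisc $|x_i| < 1$, $|y_j| < 1$ (where everything converges absolutely, so Fubini applies freely), before invoking the stated analytic continuation to extend \eqref{taishou} to $|x_i| \ge 1$, $|y_j| \ge 1$. There is also a mild subtlety in that the parameters appearing as "lower" parameters on the left, namely $b_j y_j$, are coupled products, so one must check that the $q$-shift identities are applied consistently to $(b_j y_j)_{|m|}$ as a single symbol rather than to $b_j$ and $y_j$ separately; this is routine but is the place where a sign or a $q$-power error is easiest to make. Once the formal rearrangement closes, the identity of two absolutely convergent power series with matching coefficients finishes the proof, and the continuation statement in Definition \ref{fmn} upgrades it to the full claimed domain.
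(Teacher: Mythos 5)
The paper itself gives no proof of this proposition; it is quoted verbatim from \cite{park18}, so there is nothing internal to compare against. Your overall strategy --- expand the $|m|$-dependent Pochhammer ratios into $q$-series, interchange the order of summation, and re-sum --- is the standard argument and it does close. However, the one concrete formula you commit to is the wrong expansion: writing $(q^{m_i}a_i)_\infty^{-1}=\sum_{k_i\ge 0}a_i^{k_i}q^{m_ik_i}/(q)_{k_i}$ attacks the factors $(a_i)_{m_i}$, which carry the individual degrees $m_i$ and should be left untouched until the final re-summation; moreover that Euler expansion introduces spurious powers of $a_i$. The factor that must be expanded is the one carrying the total degree $|m|$, namely
\[
\prod_{j=1}^N\frac{(y_j)_{|m|}}{(b_jy_j)_{|m|}}
=\prod_{j=1}^N\frac{(y_j)_\infty}{(b_jy_j)_\infty}\cdot\prod_{j=1}^N\frac{(q^{|m|}b_jy_j)_\infty}{(q^{|m|}y_j)_\infty}
=\prod_{j=1}^N\frac{(y_j)_\infty}{(b_jy_j)_\infty}\prod_{j=1}^N\sum_{n_j\ge 0}\frac{(b_j)_{n_j}}{(q)_{n_j}}\,(q^{|m|}y_j)^{n_j},
\]
by the $q$-binomial theorem $\sum_{n\ge 0}\frac{(b)_n}{(q)_n}z^n=(bz)_\infty/(z)_\infty$. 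Substituting this into the left-hand side, swapping the $m$- and $n$-sums (justified by absolute convergence for $|x_i|,|y_j|<1$, as you note), and distributing $q^{|m|\,|n|}=\prod_i(q^{|n|})^{m_i}$ onto the $x_i^{m_i}$, the inner $m$-sum factors over $i$ and is evaluated by a second application of the same $q$-binomial theorem as $\prod_i(a_iq^{|n|}x_i)_\infty/(q^{|n|}x_i)_\infty=\prod_i\frac{(a_ix_i)_\infty}{(x_i)_\infty}\cdot\frac{(x_i)_{|n|}}{(a_ix_i)_{|n|}}$, which is exactly the right-hand side of \eqref{taishou}. With that correction your bookkeeping worries largely evaporate: only two applications of the $q$-binomial theorem and one Fubini interchange are needed, the coupled symbol $(b_jy_j)_{|m|}$ is never split into $b_j$ and $y_j$, and the analytic continuation clause of Definition 3.1 extends the identity beyond the polydisc as you say.
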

\begin{rem}
When $N=1$ or $M=1$, the relation \eqref{taishou} is known (see  \cite{And72} \cite{kn03} for example).
\end{rem}
We can interpret the equation \eqref{taishou} as
an integral representation of $\mathcal{F}_{N,M}$ as follows.
   \begin{cor}\label{cor:sekibun}(\cite{park18}, Corollary 2.1.)
  With $y_j=q^{\gamma_j}$, the relation \eqref{taishou} can be rewritten as
 \begin{equation}\label{sekibun}
  \mathcal{F}_{N,M} \left( {\{ q^{\gamma_j} \} ,\{a_i \} \atop \{b_j q^{\gamma_j} \}};\{x_i \} \right)=
  \prod_{j=1}^N \frac{(q^{\gamma_j},b_j)_{\infty}}{(b_j q^{\gamma_j} ,q)_{\infty}} \prod_{j=1}^N \int _0^1d_q t_j \prod_{i=1}^M \frac{(a_ix_i \prod_{j=1}^N t_j )_{\infty}}{(x_i \prod_{j=1}^N t_j)_{\infty}} \prod_{j=1}^N \frac{(qt_j)_{\infty}}{(b_jt_j)_{\infty}} \frac{t_j ^{\gamma_j-1}}{1-q},
  \end{equation}
  where 
  the Jackson integral is defined as
  \begin{equation}
  \int_{0}^c d_q t f(t) = c(1-q) \sum_{n\geq 0} f(cq^n) q^n.
  \end{equation}
  \end{cor}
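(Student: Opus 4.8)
The plan is to derive the integral representation \eqref{sekibun} from the duality relation \eqref{taishou} of Proposition~\ref{dual} by a direct substitution followed by interpretation of the resulting $\mathcal{F}_{M,N}$ series as a Jackson integral. The only genuine content is recognizing a product of $q$-Beta-type integrals; everything else is bookkeeping with $q$-shifted factorials.

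First I would set $y_j=q^{\gamma_j}$ in \eqref{taishou}, which immediately gives the prefactor $\prod_{j=1}^N \frac{(q^{\gamma_j})_\infty}{(b_j q^{\gamma_j})_\infty}\prod_{i=1}^M\frac{(a_i x_i)_\infty}{(x_i)_\infty}$ times $\mathcal{F}_{M,N}\big({\{x_i\},\{b_j\}\atop\{a_ix_i\}};\{q^{\gamma_j}\}\big)$. Next I would expand this $\mathcal{F}_{M,N}$ by its series definition \eqref{fmn} (with the roles of $N$ and $M$ interchanged): the summation runs over $n_j\ge 0$ for $j=1,\dots,N$, with summand $\prod_{i=1}^M\frac{(x_i)_{|n|}}{(a_ix_i)_{|n|}}\prod_{j=1}^N\frac{(b_j)_{n_j}}{(q)_{n_j}}\prod_j (q^{\gamma_j})^{n_j}$. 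The key step is then to recognize each single-variable sum over $n_j$ as a Jackson integral $\int_0^1 d_q t_j$: by definition $\int_0^1 d_q t\, f(t)=(1-q)\sum_{n\ge 0}f(q^n)q^n$, so I need to choose $f$ so that $f(q^{n})q^{n}=\frac{(b_j)_{n}}{(q)_{n}}q^{\gamma_j n}\cdot(\text{the part of the other factors depending on }n_j)$. Using $\frac{(b_j)_n}{(q)_n}=\frac{(b_j)_\infty}{(q)_\infty}\cdot\frac{(q^{n+1})_\infty}{(q^n b_j)_\infty}$ and writing $t_j=q^{n_j}$ so that $q^{n_j}=t_j$, $(q)_{n_j}$ and $(b_j)_{n_j}$ become $\frac{(qt_j)_\infty}{(b_jt_j)_\infty}$ up to the constant $\frac{(b_j)_\infty}{(q)_\infty}$, and $q^{\gamma_j n_j}q^{n_j}=t_j^{\gamma_j-1}\cdot t_j\cdot\frac{q^{n_j}}{t_j}$, i.e.\ the measure factor $q^{n_j}$ together with $t_j^{\gamma_j-1}/(1-q)$ reproduces the Jackson weight.

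The remaining point is that the factor $\prod_{i=1}^M\frac{(x_i)_{|n|}}{(a_ix_i)_{|n|}}$ couples all the $n_j$ together through $|n|=n_1+\cdots+n_N$. Here I would use the identity $\frac{(x_i)_{|n|}}{(a_ix_i)_{|n|}}=\frac{(x_i)_\infty}{(a_ix_i)_\infty}\cdot\frac{(a_i x_i q^{|n|})_\infty}{(x_i q^{|n|})_\infty}$ and note $q^{|n|}=\prod_{j=1}^N t_j$, so that this factor becomes $\frac{(x_i)_\infty}{(a_ix_i)_\infty}\cdot\frac{(a_i x_i\prod_j t_j)_\infty}{(x_i\prod_j t_j)_\infty}$; the constant $\frac{(x_i)_\infty}{(a_ix_i)_\infty}$ cancels against the $\prod_i\frac{(a_ix_i)_\infty}{(x_i)_\infty}$ from the prefactor of \eqref{taishou}, leaving exactly $\prod_{i=1}^M\frac{(a_ix_i\prod_j t_j)_\infty}{(x_i\prod_j t_j)_\infty}$ inside the integrand. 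Collecting constants, the prefactor becomes $\prod_{j=1}^N\frac{(q^{\gamma_j})_\infty}{(b_jq^{\gamma_j})_\infty}\cdot\prod_{j=1}^N\frac{(b_j)_\infty}{(q)_\infty}=\prod_{j=1}^N\frac{(q^{\gamma_j},b_j)_\infty}{(b_jq^{\gamma_j},q)_\infty}$, matching \eqref{sekibun}.

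I do not expect a serious obstacle: the main subtlety is purely notational --- keeping straight which $q$-factorial constants come out of which $(a)_n\to(a)_\infty/(q^na)_\infty$ rewriting, and making sure the multi-index factor $q^{|n|}=\prod_j t_j$ is handled consistently so that the single Jackson integrals over the $t_j$ genuinely reconstruct the original multiple sum. One should also remark that the interchange of the (absolutely convergent, for $|x_i|<1$) sum with the Jackson integrals --- which are themselves just convergent sums --- is harmless, so the manipulation is valid in the region where \eqref{taishou} holds and then extends by analytic continuation as stated in Definition~\ref{fmn}.
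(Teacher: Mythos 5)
Your derivation is correct and is exactly the intended one: the paper supplies no proof of this corollary (it is quoted from \cite{park18}), but it is presented as a direct rewriting of the duality relation \eqref{taishou}, which is precisely what you do --- set $y_j=q^{\gamma_j}$, expand $\mathcal{F}_{M,N}$ by its series definition, and fold each $n_j$-sum into a Jackson integral using $(a)_n=(a)_\infty/(q^na)_\infty$, with the constants recombining into $\prod_j(q^{\gamma_j},b_j)_\infty/(b_jq^{\gamma_j},q)_\infty$. One cosmetic slip: your displayed identity $q^{\gamma_jn_j}q^{n_j}=t_j^{\gamma_j-1}\,t_j\,q^{n_j}/t_j$ carries a spurious $q^{n_j}$ on the left; the correct matching is $q^{\gamma_jn_j}=t_j^{\gamma_j-1}\cdot q^{n_j}$, the factor $q^{n_j}$ being exactly the one supplied by the Jackson measure $(1-q)q^{n_j}$ against the $1/(1-q)$ in the integrand, which is what your final bookkeeping in fact uses.
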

     \begin{prop}\label{qde}(\cite{park18}, Proposition 2.2)
  The series $\mathcal{F}=\mathcal{F}_{N,M} \Big({\{a_j\},\{ b_i\} \atop \{ c_j \}}; \{ y_i\}\Big)$ satisfies the  $q$-difference equations
  \begin{equation}\label{sabun1}  
  \begin{array}{l}
  \left\{ \prod_{j=1}^N (1-c_jq^{-1} T) \cdot (1-T_{y_s}) -y_s  \prod_{j=1}^N (1-a_jT)\cdot (1-b_s T_{y_s})\right\}\mathcal{F}=0\quad(1\leq s \leq M),\\
  \{y_r (1-b_r T_{y_r})(1-T_{y_s}) -y_s (1-b_s T_{y_s})(1-T_{y_r}) \} \mathcal{F}=0\quad (1\leq r <s\leq M),
  \end{array}
  \end{equation}
  where $T_{y_s}$ is the $q$-shift operator for  the variable $y_s$ and $T=T_{y_1}\cdots T_{y_M}$.
  \end{prop}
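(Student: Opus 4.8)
The plan is to verify each of the two families of identities by comparing the coefficients of monomials on the two sides of the power series. This is legitimate on the polydisc $|y_i|<1$, where the series converges absolutely, and the conclusion then extends to $|y_i|\ge 1$ by analytic continuation. Write $\mathcal{F}=\sum_m c_m\,y^m$ in multi-index notation, with $m=(m_1,\dots,m_M)$, $y^m=\prod_{i=1}^M y_i^{m_i}$, $|m|=\sum_{i=1}^M m_i$, and
\[
c_m=\prod_{j=1}^N\frac{(a_j)_{|m|}}{(c_j)_{|m|}}\prod_{i=1}^M\frac{(b_i)_{m_i}}{(q)_{m_i}}.
\]
The operators act diagonally on monomials, $T_{y_s}y^m=q^{m_s}y^m$ and $Ty^m=q^{|m|}y^m$, so the whole problem reduces to one recursion for $c_m$. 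The only input from the definition of $(a)_n$ is the contiguity relation $(a)_{n+1}=(1-q^na)(a)_n$, which yields, for any index $s$ with $m_s\ge 1$,
\[
\frac{c_m}{c_{m-e_s}}=\frac{\prod_{j=1}^N(1-q^{|m|-1}a_j)}{\prod_{j=1}^N(1-q^{|m|-1}c_j)}\cdot\frac{1-q^{m_s-1}b_s}{1-q^{m_s}},
\]
where $e_s$ is the $s$-th standard basis vector.

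For the first equation, apply the operator termwise and collect the coefficient of $y^m$. The term $\prod_{j=1}^N(1-c_jq^{-1}T)(1-T_{y_s})\mathcal{F}$ contributes $\prod_{j=1}^N(1-c_jq^{|m|-1})\,(1-q^{m_s})\,c_m$, while $y_s\prod_{j=1}^N(1-a_jT)(1-b_sT_{y_s})\mathcal{F}$ shifts $m\mapsto m-e_s$ and contributes $-\prod_{j=1}^N(1-a_jq^{|m|-1})(1-b_sq^{m_s-1})\,c_{m-e_s}$ when $m_s\ge1$ (and nothing when $m_s=0$). Substituting the displayed ratio for $c_m$ in the first contribution, the factors $\prod_j(1-c_jq^{|m|-1})$ and $(1-q^{m_s})$ cancel against the corresponding denominators and it becomes $\prod_{j=1}^N(1-a_jq^{|m|-1})(1-b_sq^{m_s-1})\,c_{m-e_s}$, which exactly cancels the second contribution. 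When $m_s=0$ the first contribution carries the factor $1-q^0=0$ and the second is absent, so the coefficient of $y^m$ vanishes there too; hence $\mathcal{F}$ is annihilated by the operator.

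The second equation is treated in the same way, comparing coefficients of $y^n$. The term $y_r(1-b_rT_{y_r})(1-T_{y_s})\mathcal{F}$ contributes $(1-b_rq^{n_r-1})(1-q^{n_s})\,c_{n-e_r}$ for $n_r\ge1$, and $y_s(1-b_sT_{y_s})(1-T_{y_r})\mathcal{F}$ contributes $-(1-b_sq^{n_s-1})(1-q^{n_r})\,c_{n-e_s}$ for $n_s\ge1$. When $n_r,n_s\ge1$, reduce both $c_{n-e_r}$ and $c_{n-e_s}$ to $c_{n-e_r-e_s}$ via the contiguity relation, applied in the $s$- and $r$-direction respectively; both contributions then reduce to
\[
\pm\,c_{n-e_r-e_s}\cdot\frac{\prod_{j=1}^N(1-q^{|n|-2}a_j)}{\prod_{j=1}^N(1-q^{|n|-2}c_j)}\cdot(1-b_rq^{n_r-1})(1-b_sq^{n_s-1}),
\]
and so cancel. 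If $n_r=0$ the first term is absent while the second carries the factor $1-q^{n_r}=0$, and symmetrically for $n_s=0$, so the coefficient of $y^n$ again vanishes identically.

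I do not expect a genuine obstacle: the argument is multi-index bookkeeping together with the single $q$-Pochhammer contiguity identity. The only points needing care are the justification of the termwise manipulation (absolute convergence for $|y_i|<1$, already recorded in Definition \ref{fmn}) followed by analytic continuation, and the careful accounting of the boundary monomials where some $m_i$ or $n_i$ vanishes. Alternatively one could deduce \eqref{sabun1} from the integral representation in Corollary~\ref{cor:sekibun} combined with the classical $q$-difference equations for $\varphi_D$ and ${}_{N+1}\varphi_N$, but the direct coefficient comparison above is the most economical route.
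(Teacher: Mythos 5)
The paper itself gives no proof of this proposition — it is quoted verbatim from \cite{park18} — so there is nothing internal to compare against; judged on its own, your argument is correct and complete. The contiguity ratio $c_m/c_{m-e_s}$ is computed correctly from $(a)_{n+1}=(1-q^na)(a)_n$, the cancellation of the coefficient of $y^m$ works exactly as you describe for both families of equations, the boundary cases $m_s=0$ (resp.\ $n_r=0$ or $n_s=0$) are properly accounted for by the vanishing factor $1-q^0$, and the appeal to absolute convergence on $|y_i|<1$ followed by analytic continuation is the right justification for the termwise manipulation; this coefficient-comparison route is the standard and expected one for such $q$-difference equations.
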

  We consider a representation by a Pfaffian system of the equation \eqref{sabun1} in the next subsection.
  \section{A Pfaffian system derived from $\mathcal{F}_{N,M}$}\label{pfnm}
  In this section, we derive a Pfaffian system of size $(MN+1)\times (MN+1)$ from an integral representation of {$\mathcal{F}_{N,M}$}. From Corollary \ref{cor:sekibun}, the integral representation of {$\mathcal{F}_{N,M}$} is given in \eqref{sekibun}. 
 We compute the Pfaffian system for the integral \eqref{sekibun}.  
 To do this, we follow the method given in \cite{matsuo93}, \cite{mimachi89} in the same way as \cite{park18}.
  We denote the integrand of \eqref{sekibun} as $\Phi (\{ u_j \}_{j=1}^N)$
   \begin{equation}\label{integrand}
  \Phi(\{ u_j\}_{j=1}^N )=\prod_{j=1}^N{u_j ^{\nu_j}} \frac{(qu_j/u_{j-1})_{\infty}}{(b_ju_j/u_{j-1})_{\infty}}\prod_{i=1}^M \frac{(a_ix_i u_N )_{\infty}}{(x_i u_N)_{\infty}},
  \end{equation}
  where $u_j$ stands for $\prod_{k=1}^j t_k$ and
  we put the parameter $\nu_j:=\gamma_j-\gamma_{j+1}$ $(\nu_{N+1}=0)$. 
   We define functions $\Psi_0$, {$\Psi_{j,i}$ $(1\leq j \leq N, 1\leq i \leq M)$} as 
    \begin{equation}
    \begin{array}{c}
  \displaystyle\Psi_0=\langle \Phi p_0\rangle,\\
  \displaystyle\Psi_{j,i}=\langle \Phi p_{j,i} \rangle
   \end{array}
   \end{equation}
where $p_0=1$, $p_{j,i}=\displaystyle\frac{u_{j-1}-b_j u_j}{1-a_ix_iu_N}\prod_{k=1}^{i-1}\frac{1-x_ku_N}{1-a_kx_ku_N}$ 
    and $\langle \  \rangle$ means a kind of  Jackson integral for {$u_1,u_2,\cdots,u_N$}. 
  Namely, {$\langle f(\{u_j\}) \rangle=\displaystyle\sum_{n_j \in \mathbb{Z}} f(\{ q^{n_j}\})$.
  We will see  that $\{ \Psi_0, \Psi_{1,1},\cdots, \Psi_{N,M} \}$} is a basis of the solutions of the Pfaffian system. 
  \begin{rem}
  The basis $\{p_0, p_{j,i}\}$ defined above has been used in  many literatures 
  (See for example \cite{matsuo93},  \cite{mimachi96}, \cite{res92}). 
  Such a kind of basis is convenient for a computation because it is directly related to a shift of function 
  $\Phi$ as $\Phi p_{j,i}=u_{j-1}T_{a_i}T_{b_j}T_{x_{i-1}}T_{x_{i-2}}\cdots T_{x_1}\Phi$. 
  Therefore the following equations hold
  \begin{equation}\label{psiint}
  \begin{array}{c}
 \displaystyle \Psi_0= \frac{(\{b_jq^{\gamma_j}\},\{q\})_{\infty}}{(\{q^{\gamma_j} \},\{b_j\})_{\infty}}\mathcal{F}, \\ 
  \displaystyle \Psi_{j,i}
  =\frac{(1-b_j)}{(1-b_jq^{\gamma_j})}\prod_{l=1}^{j-1}\frac{(1-q^{\gamma_l})}{(1-b_lq^{\gamma_l})}
  T_{a_i}T_{x_1}T_{x_2}\cdots T_{x_{i-1}} T_{b_j}
  \mathcal{F},
   \end{array}
   \end{equation}
   \begin{equation}
  \mathcal{F}=\mathcal{F}_{N,M}\left( {\{ q^{\gamma_j} \} ,\{a_i \} \atop \{b_j q^{\gamma_j} \}};\{x_i \} \right)\quad ({1\leq j \leq N \atop 1 \leq i \leq M})
  \end{equation}
  \end{rem}
  We define an exchange operator $\sigma_i$ $(1 \leq i \leq M)$ acting on a function $f$ of $\{x_i, a_i \}$ as
  \begin{equation}
  \sigma_i (f) =f|_{x_i\leftrightarrow x_{i+1}, a_i\leftrightarrow a_{i+1}}.
  \end{equation}
  We note that $\sigma_i (\Phi (\{ u_j \}))=\Phi (\{u_j\})$. When the operator $\sigma_i$ acts on functions $p_0$, $p_{j,i}$ {$(1\leq j \leq N, 1\leq i \leq M)$}, we have the following relations.
  \begin{prop}\label{prop:gokan} For $1 \leq i \leq M$, we have
  \begin{equation}\label{gokan}
  \sigma_i(\Psi_{j,k})=
  \begin{cases}
  \displaystyle\frac{(1-a_i)x_i}{x_i-a_{i+1}x_{i+1}}\Psi_{j,i}+\frac{a_ix_i-a_{i+1}x_{i+1}}{x_i-a_{i+1}x_{i+1}}\Psi_{j,i+1},&(k =i)\\
  \displaystyle \frac{x_i-x_{i+1}}{x_i-a_{i+1}x_{i+1}}\Psi_{j,i}+\frac{(1-a_{i+1})x_{i+1}}{x_i-a_{i+1} x_{i+1}} \Psi_{j,i+1},& (k=i+1)\\
  \Psi_{j,k}. & (k \neq i,i+1)
  \end{cases}
  \end{equation}
  \end{prop}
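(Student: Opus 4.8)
I would build the proof on the invariance $\sigma_i(\Phi)=\Phi$ recorded just above. Since $\sigma_i$ only relabels the parameters $x_i,x_{i+1},a_i,a_{i+1}$, which are constants with respect to the indices $n_1,\dots,n_N$ over which $\langle\,\cdot\,\rangle$ sums, it commutes with $\langle\,\cdot\,\rangle$, so that $\sigma_i(\Psi_{j,k})=\langle\Phi\,\sigma_i(p_{j,k})\rangle$. Hence the whole problem reduces to expressing $\sigma_i(p_{j,k})$ as a linear combination of $p_{j,i}$ and $p_{j,i+1}$ (or of $p_{j,k}$ itself) with coefficients not involving the $u$'s, and then re-inserting $\langle\Phi\,\cdot\,\rangle$.

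First I would dispose of the case $k\neq i,i+1$: if $k<i$ then none of the factors $1-a_kx_ku_N$, $1-x_lu_N$, $1-a_lx_lu_N$ $(l\le k-1)$ appearing in $p_{j,k}$ is affected by $\sigma_i$, so $\sigma_i(p_{j,k})=p_{j,k}$; if $k>i+1$ then the product $\prod_{l=1}^{k-1}$ contains both the $l=i$ and the $l=i+1$ factors, which $\sigma_i$ simply interchanges, so again $\sigma_i(p_{j,k})=p_{j,k}$.

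For $k=i$ and $k=i+1$ I would strip off the common factor. Writing $w:=u_N$ and $P:=(u_{j-1}-b_ju_j)\prod_{l=1}^{i-1}\frac{1-x_lw}{1-a_lx_lw}$, one has $\sigma_i(P)=P$ and
\begin{equation*}
p_{j,i}=\frac{P}{1-a_ix_iw},\qquad p_{j,i+1}=\frac{P(1-x_iw)}{(1-a_ix_iw)(1-a_{i+1}x_{i+1}w)},
\end{equation*}
\begin{equation*}
\sigma_i(p_{j,i})=\frac{P}{1-a_{i+1}x_{i+1}w},\qquad \sigma_i(p_{j,i+1})=\frac{P(1-x_{i+1}w)}{(1-a_ix_iw)(1-a_{i+1}x_{i+1}w)}.
\end{equation*}
Each of these is $P$ times a rational function of $w$ with at worst simple poles at $w=(a_ix_i)^{-1}$ and $w=(a_{i+1}x_{i+1})^{-1}$ and numerator of degree at most one. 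Clearing these two denominators turns the desired identities into equalities between degree-one polynomials in $w$; matching the constant term and the coefficient of $w$ then gives, in each of the two cases, a $2\times2$ linear system for the coefficients whose matrix has determinant equal to $x_i-a_{i+1}x_{i+1}$ up to sign, and solving it reproduces exactly the fractions displayed in \eqref{gokan}.

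Once the common factor $P$ has been isolated every step is forced, so I do not expect a genuine obstacle; the one point that merits a line of justification is the commutation of $\sigma_i$ with $\langle\,\cdot\,\rangle$ — equivalently, with the analytic continuation of $\mathcal{F}_{N,M}$ — which is legitimate because $\sigma_i$ acts on the parameters only, not on the summation indices.
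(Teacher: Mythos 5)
Your proposal is correct and follows essentially the same route as the paper: both exploit $\sigma_i(\Phi)=\Phi$ to reduce the claim to writing $\sigma_i(p_{j,i})$ and $\sigma_i(p_{j,i+1})$ as linear combinations of $p_{j,i}$ and $p_{j,i+1}$ with coefficients independent of the $u_j$, the paper deferring the coefficient computation to Proposition 3.1 of the earlier work while you carry it out explicitly (and your $2\times 2$ systems do yield exactly the fractions in \eqref{gokan}).
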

  \begin{proof}   The last equation of \eqref{gokan} is obvious.
The other equations of \eqref{gokan} is proved in the same way as Proposition 3.1. in the previous work \cite{park18}.  
Namely we consider the equations
  \begin{equation}
  \begin{array}{c}
  \sigma_i (p_{j,i})=s_1\ p_{j,i}+s_2\ p_{j,i+1},\\
  \sigma_i (p_{j,i+1})=s_3\ p_{j,i}+s_4\ p_{j,i+1}.
  \end{array}
  \end{equation}
  There exist unique coefficients $s_1$, $s_2$, $s_3$, $s_4$ independent of $u_j$  satisfying these
relations.
  \end{proof}
  For the action of the $q$-shift operator $T_{x_M}$ on $\Psi_0$, $\cdots$, $\Psi_{N,M}$, we obtain the following equations.
  \begin{prop}\label{prop:mup} We have
  \begin{equation}\label{shift}
  \begin{array}{l}
  T_{x_M} (\Psi_0)=\displaystyle\frac{x_M-\beta
  }{a_Mx_M-\beta} \rho (\Psi_0)+\frac{(a_M-1)x_M}{a_Mx_M-\beta}\sum_{j=1}^N (\prod_{k=1}^{j-1} b_k ) \rho (\Psi_{j,1}), \\
  T_{x_M} (\Psi_{j,i})=\rho(\Psi_{j,i+1})\quad(i\neq M), \\
  T_{x_M} (\Psi_{j,M})=\displaystyle\frac{q^{-\sum_{k=j}^N \gamma_k}(1-b_j)}{a_Mx_M-\beta} 
  (\prod_{l=j+1}^N b_l ) \biggl[\rho (\Psi_0)-\sum_{k=1}^{j-1} (\prod_{t=1}^k b_t ) \rho (\Psi_{k,1}) \\
  \qquad \qquad \quad 
  +\displaystyle\frac{a_Mx_M-\beta/b_j  }{(1-b_j)\prod_{k=j}^N b_k} \rho (\Psi_{j,1}) -a_Mx_M \sum_{k=j+1}^N (\prod_{t=k}^N b_t ^{-1} ) \rho (\Psi_{k,1}) \biggr],
\end{array}
\end{equation}
  where $\beta=\prod_{j=1}^Nb_j$, $\rho=\sigma_{M-1}\cdots\sigma_1$.
    \end{prop}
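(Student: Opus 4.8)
The plan is to reduce all three identities to the single computation $T_{x_M}\Phi=\frac{1-x_Mu_N}{1-a_Mx_Mu_N}\Phi$, which follows at once from $(qa)_\infty/(a)_\infty=(1-a)^{-1}$ applied to the factor $\prod_i(a_ix_iu_N)_\infty/(x_iu_N)_\infty$ of the integrand $\Phi$ in \eqref{integrand}. I will use throughout that $\rho=\sigma_{M-1}\cdots\sigma_1$ fixes $\Phi$ (each $\sigma_i$ does, by the Remark preceding Proposition \ref{prop:gokan}) and that, acting on a rational function, $\rho$ decreases every index $l$ of $x_l,a_l$ by $1$, with $x_1,a_1$ wrapping to $x_M,a_M$. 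Hence $\rho$ inside $\langle\ \rangle$ merely relabels the parameters, and each formula will be obtained by rewriting the rational prefactor produced by $T_{x_M}$ as an explicit linear combination of the prefactors that define the functions $\rho(\Psi_\bullet)$ on the right-hand side.

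For $\Psi_0$ and for $\Psi_{j,i}$ with $i\neq M$ the functions $p_0$ and $p_{j,i}$ do not contain $x_M$, so $T_{x_M}(\Phi p)=\frac{1-x_Mu_N}{1-a_Mx_Mu_N}\Phi p$. The second identity then follows from the elementary observation that, since $x_1$ appears in $p_{j,i+1}$ only through the factor $\frac{1-x_1u_N}{1-a_1x_1u_N}$, the index shift above yields $\rho(p_{j,i+1})=\frac{1-x_Mu_N}{1-a_Mx_Mu_N}p_{j,i}$; applying $\langle\ \rangle$ gives $T_{x_M}(\Psi_{j,i})=\rho(\Psi_{j,i+1})$. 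For $\Psi_0$ I would use the telescoping identity $\sum_{j=1}^N(\prod_{k=1}^{j-1}b_k)(u_{j-1}-b_ju_j)=1-\beta u_N$ (with $u_0=1$), which gives $\langle\frac{1-\beta u_N}{1-a_Mx_Mu_N}\Phi\rangle=\sum_{j=1}^N(\prod_{k=1}^{j-1}b_k)\rho(\Psi_{j,1})$ and $\langle\Phi\rangle=\rho(\Psi_0)$; since $\frac{1-x_Mu_N}{1-a_Mx_Mu_N}=\frac{x_M-\beta}{a_Mx_M-\beta}+\frac{(a_M-1)x_M}{a_Mx_M-\beta}\cdot\frac{1-\beta u_N}{1-a_Mx_Mu_N}$ (matching the constant term and the coefficient of $u_N$), the first formula drops out.

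The main obstacle is the third formula. Now $p_{j,M}$ does involve $x_M$, and $T_{x_M}(\Phi p_{j,M})=\frac{(1-x_Mu_N)(u_{j-1}-b_ju_j)}{(1-a_Mx_Mu_N)(1-a_Mqx_Mu_N)}\prod_{k=1}^{M-1}\frac{1-x_ku_N}{1-a_kx_ku_N}\,\Phi$ carries a second-order pole in $u_N$, whereas the prefactors of $\rho(\Psi_0)$ and of the $\rho(\Psi_{k,1})$ carry at most the single factor $(1-a_Mx_Mu_N)^{-1}$. To lower the order I would invoke a contiguity (``$q$-integration by parts'') relation for the Jackson sum, $\langle(1-T_{u_N})(g\Phi)\rangle=0$, i.e. $\langle g\Phi\rangle=\langle(T_{u_N}g)\,q^{\nu_N}\frac{1-b_Nu_N/u_{N-1}}{1-qu_N/u_{N-1}}\prod_{i=1}^M\frac{1-x_iu_N}{1-a_ix_iu_N}\,\Phi\rangle$, and choose $g$ — carrying a factor $1-u_N/u_{N-1}$ to absorb the spurious pole, a factor $(1-a_Mx_Mu_N)^{-1}$, and arranged so that $T_{u_N}g$ reproduces $u_{j-1}-b_ju_j$ up to the cleared factors — so that its right-hand side is exactly the displayed double-pole integrand.

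This rewrites $T_{x_M}(\Psi_{j,M})$ as $\langle\frac{h}{1-a_Mx_Mu_N}\Phi\rangle$ for a polynomial $h$, which one expands in $\{\,1,\ u_{k-1}-b_ku_k\ (1\le k\le N)\,\}$ to recover the stated combination of $\rho(\Psi_0)$ and the $\rho(\Psi_{k,1})$. The factor $q^{-\sum_{k=j}^N\gamma_k}$ and the products of the $b_l$ in the formula come precisely from iterating this shift relation (through the exponent $q^{\nu_N}$ and the factor $1-b_Nu_N/u_{N-1}$), and the bookkeeping of the resulting coefficients — splitting the final sum according to $k<j$, $k=j$, $k>j$ — is the most delicate point. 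As in \cite{park18}, this computation follows the method of \cite{matsuo93}, \cite{mimachi89}.
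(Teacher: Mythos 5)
Your strategy is essentially the paper's: compute $T_{x_M}\Phi=\frac{1-x_Mu_N}{1-a_Mx_Mu_N}\Phi$, dispose of the first two formulas by rewriting the resulting prefactors in the basis $\{1,\ u_{k-1}-b_ku_k\}$ over $1-a_Mx_Mu_N$ (your telescoping identity and partial-fraction split are exactly what is needed and are correct), and for $\Psi_{j,M}$ use the invariance of the Jackson sum under integer $q$-shifts of the integration variables to lower the pole order before expanding. The one place your sketch under-specifies the argument is the choice of shifts in the third formula. The single relation $\langle(1-T_{u_N})(g\Phi)\rangle=0$ suffices only for $j=N$: there one may take $g=q^{-\nu_N}\frac{u_{N-1}-u_N}{1-a_Mx_Mu_N}$, because $T_{u_N}^{-1}(u_{N-1}-b_Nu_N)=u_{N-1}\bigl(1-b_Nu_N/(qu_{N-1})\bigr)$ cancels the factor $1-b_Nu_N/(qu_{N-1})$ produced by shifting $\Phi$. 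For $j<N$ that factor survives, so the required $g$ is \emph{not} of the form $h/(1-a_Mx_Mu_N)$ with $h$ polynomial, and iterating the $u_N$-shift alone never removes the extra pole at $u_N=qu_{N-1}/b_N$; one must also shift $u_{N-1},\dots,u_j$. The paper does exactly this in one stroke, applying $T_{u_j}^{-1}T_{u_{j+1}}^{-1}\cdots T_{u_N}^{-1}$ before $T_{x_M}$, under which all intermediate factors telescope and $T_{x_M}T_{u_j}^{-1}\cdots T_{u_N}^{-1}(\Phi p_{j,M})=q^{-\gamma_j}\,\Phi\,\frac{u_{j-1}-u_j}{1-a_Mx_Mu_N}$; the final expansion of $u_{j-1}-u_j$ in $\{1-a_Mx_Mu_N,\ u_{k-1}-b_ku_k\}$ is then the linear-algebra step you describe. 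So the idea is right and the first two identities are fully proved, but for the third you need the iteration to run over all of $u_j,\dots,u_N$, not just $u_N$; as written, your plan closes only in the case $j=N$.
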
  
    \begin{pf}
  First, we make a shift by  $T_{x_M}$ on $\Phi p_0$, 
  $\Phi p_{j,i}$ $(1\leq j \leq N, 1\leq i \leq M)$. We easily obtain the following equations
  \begin{equation}\label{p0}
  \begin{array}{l}
    T_{x_M} (\Phi p_{j,i})=\Phi \rho (p_{j,i+1}) \quad (i \neq M),\\
  T_{x_M} (\Phi p_0) =\Phi \displaystyle\frac{1-x_Mu_N}{1-a_Mx_Mu_N},\\
  T_{x_M} T_{u_j}^{-1}T_{u_{j+1}}^{-1}\cdots T_{u_N}^{-1} (\Phi p_{j,M})  
  =\displaystyle q^{-\gamma_j} \Phi \frac{u_{j-1}-u_j}{1-a_Mx_Mu_N}.\\
  \end{array}
  \end{equation}
  The right-hand side of the second and third equations in \eqref{p0}, can be rewritten as a linear combination of $\rho(\Phi p_0)=\Phi p_0$ and 
  $\rho (\Phi p_{j,1})=\Phi \displaystyle\frac{u_{j-1}-b_ju_j}{1-a_Mx_Mu_N}$
  respectively, that is,
  \begin{equation}\label{0}
  \begin{array}{l}
  T_{x_M} (\Phi p_0)=\displaystyle\frac{x_M-\beta}{a_Mx_M-\beta} \rho (\Phi p_0)+\frac{(a_M-1)x_M}{a_Mx_M-\beta
  }\sum_{j=1}^N (\prod_{k=1}^{j-1} b_k ) \rho (\Phi p_{j,1}), 
  \\
  T_{x_M} (\Phi p_{j,M})=\displaystyle\frac{q^{-\sum_{k=j}^N \gamma_k}(1-b_j)}{a_Mx_M-\beta} (\prod_{l=j+1}^N b_l ) \biggl[\rho (\Phi p_0)-\sum_{k=1}^{j-1} (\prod_{t=1}^k b_t ) \rho (\Phi p_{k,1}) \\
  \qquad \qquad \quad +\displaystyle\frac{a_Mx_M-\beta}{(1-b_j)\prod_{k=j}^N b_k} \rho (\Phi p_{j,1}) -a_Mx_M \sum_{k=j+1}^N (\prod_{t=k}^N b_t ^{-1} ) \rho (\Phi p_{k,1}) \biggr],
\end{array}
\end{equation}
  where $\beta=\prod_{j=1}^Nb_j$, $\rho=\sigma_{M-1}\cdots\sigma_1$.
  Integrating the first equation of \eqref{p0} and equations in  \eqref{0}  with respect to  
  $u_1,u_2,\cdots,u_N$,
  we obtain the equations in \eqref{shift}.
  \qed
  \end{pf}
   Combining Proposition \ref{prop:gokan} and Proposition \ref{prop:mup}, we obtain the following Theorem. 
  \begin{thm}\label{thm:pfaff}
  The vector 
  $\overrightarrow{\Psi}=[\Psi_A]_{A\in I}$, $I=\{0\} \cup \{ (j,i)|1\leq j \leq N, 1\leq i \leq M\}$  
  satisfies the  Pfaffian system of rank $MN+1$
  \begin{equation}\label{appu}
  T_{x_M} \overrightarrow{\Psi}=\rho \mu \overrightarrow{\Psi}=\sigma_{M-1}\sigma_{M-2}\cdots \sigma_1 \mu \overrightarrow{\Psi},
  \end{equation}
  where the operators $\mu$ and $\sigma_i$ ($1\leq i \leq M$) are
  \begin{equation}\label{ms}
  \begin{array}{c}
  \mu (x_1)=qx_1, \quad \mu(\Psi_0)=\displaystyle\frac{x_1-\beta}{a_1x_1-\beta} \Psi_0+\frac{(a_1-1)x_1}{a_1x_1-\beta}\sum_{j=1}^N (\prod_{k=1}^{j-1} b_k ) \Psi_{j,1},\\
  \mu (\Psi_{k,l})=\begin{cases} \Psi_{k,l+1} \quad (l \neq M),\\
     \displaystyle\frac{q^{-\sum_{k=j}^N\gamma_k}(1-b_j)}{a_1x_1-\beta} 
  (\prod_{l=j+1}^N b_l )
   \biggl[\Psi_0-\sum_{k=1}^{j-1} (\prod_{t=1}^k b_t ) \Psi_{k,1}+\\
   \quad \displaystyle\frac{a_1x_1-\beta/b_j  }{(1-b_j)\prod_{k=j}^N b_k} \Psi_{j,1} -a_1x_1 \sum_{k=j+1}^N (\prod_{t=k}^N b_t ^{-1} ) \Psi_{k,1} \biggr] &(l=M),\end{cases}
  \end{array}
  \end{equation}
\begin{equation}\label{sgm}
\begin{array}{c}
\sigma_i(a_i)=a_{i+1},  \sigma_i(a_{i+1})=a_i,   \sigma_i(x_i)=x_{i+1}, \sigma_i(x_{i+1})=x_i, \vspace{\baselineskip}\\
\sigma_i (\Psi_0)=\Psi_0, \sigma_i(\Psi_{k,l})=\begin{cases}
  \displaystyle\frac{(1-a_i)x_i}{x_i-a_{i+1}x_{i+1}}\Psi_{k,i}+\frac{a_ix_i-a_{i+1}x_{i+1}}{x_i-a_{i+1}x_{i+1}}\Psi_{k,i+1},&(l =i)\\
  \displaystyle \frac{x_i-x_{i+1}}{x_i-a_{i+1}x_{i+1}}\Psi_{k,i}+\frac{(1-a_{i+1})x_{i+1}}{x_i-a_{i+1} x_{i+1}} \Psi_{k,i+1},& (l=i+1)\\
  \Psi_{k,l},& (l \neq i, i+1)
  \end{cases}
\end{array}
\end{equation}
\end{thm}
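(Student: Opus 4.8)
My plan is to obtain the theorem by assembling Propositions~\ref{prop:gokan} and~\ref{prop:mup} into operator form and then checking that the resulting system closes on the $1+MN$ components $\overrightarrow{\Psi}=[\Psi_A]_{A\in I}$. The essential input is already contained in Proposition~\ref{prop:mup}, which expresses $T_{x_M}\overrightarrow{\Psi}$ as $\rho$ applied to an explicit linear combination of the $\Psi_A$, with coefficients rational in $x_M$, $a_M$, the $b_j$ and the $q^{\gamma_j}$; what remains is to recognise that combination as $\mu\overrightarrow{\Psi}$ with $\mu$ as in \eqref{ms}, and to record the action of the $\sigma_i$.

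The first step is to pin down the substitution $\rho=\sigma_{M-1}\sigma_{M-2}\cdots\sigma_1$. Reading the transpositions in \eqref{sgm} from the right, the variable $x_1$ is carried successively to $x_2,\dots,x_M$, so $\rho(x_1)=x_M$ and likewise $\rho(a_1)=a_M$, whereas $\rho$ fixes every $b_j$, every $q^{\gamma_j}$, and hence $\beta=\prod_{j=1}^N b_j$. Consequently, applying $\rho$ to the coefficients $\frac{x_1-\beta}{a_1x_1-\beta}$, $\frac{(a_1-1)x_1}{a_1x_1-\beta}$, $\frac{q^{-\sum_{k=j}^N\gamma_k}(1-b_j)}{a_1x_1-\beta}$ and $\frac{a_1x_1-\beta/b_j}{(1-b_j)\prod_{k=j}^N b_k}$ appearing in \eqref{ms} reproduces verbatim the coefficients in \eqref{shift}. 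I would then check \eqref{appu} row by row, comparing $\rho\mu\overrightarrow{\Psi}$ computed from \eqref{ms} with the formulas \eqref{shift}: the $\Psi_0$-row, the rows $\Psi_{j,i}$ with $i\neq M$ (where $\mu(\Psi_{j,i})=\Psi_{j,i+1}$ matches $T_{x_M}(\Psi_{j,i})=\rho(\Psi_{j,i+1})$), and the rows $\Psi_{j,M}$. The $\sigma_i$-part of \eqref{sgm} is then nothing but Proposition~\ref{prop:gokan} together with the defining action $\sigma_i\colon a_i\leftrightarrow a_{i+1}$, $x_i\leftrightarrow x_{i+1}$ on the parameters; since on each $\Psi_{j,k}$ the operator $\sigma_i$ is induced by an honest substitution on the $p_{j,k}$ (via $\sigma_i(\Phi)=\Phi$), the $\sigma_i$ obey the symmetric group relations and $\rho$ is well defined.

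Finally I would verify that \eqref{appu} really defines a rank $MN+1$ Pfaffian system, i.e. that its right-hand side stays in the linear span of $\{\Psi_A : A\in I\}$: from \eqref{ms}, $\mu$ sends $\overrightarrow{\Psi}$ into the span of $\{\Psi_0,\Psi_{k,1},\Psi_{k,l+1}\}$, and by Proposition~\ref{prop:gokan} each $\sigma_i$, hence the composite $\rho$, preserves this span; together with the $\sigma_i$-relations, \eqref{appu} then fixes the $q$-shift of $\overrightarrow{\Psi}$ in each variable $x_i$, and compatibility of the system is automatic since the $\Psi_A$ are genuine Jackson integrals of the explicit integrand $\Phi$. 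I expect the only real obstacle here to be clerical rather than conceptual: checking that the most intricate case, $l=M$ in \eqref{ms}, $\rho$-transforms precisely into the displayed formula for $T_{x_M}(\Psi_{j,M})$ in \eqref{shift} — in particular that the numerator $a_1x_1-\beta/b_j$ becomes $a_Mx_M-\beta/b_j$ and all the $b$-products align — while every other relation follows at once from the two preceding propositions.
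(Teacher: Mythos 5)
Your proposal is correct and follows essentially the same route as the paper, whose proof of Theorem~\ref{thm:pfaff} is exactly the direct computation combining Propositions~\ref{prop:gokan} and~\ref{prop:mup}: identifying the linear combination in \eqref{shift} as $\rho\mu\overrightarrow{\Psi}$ with $\rho(x_1)=x_M$, $\rho(a_1)=a_M$ and the $b_j$, $\gamma_j$ fixed. Your additional remarks on closure of the span and on the $l=M$ coefficient $a_1x_1-\beta/b_j\mapsto a_Mx_M-\beta/b_j$ are the right points to check and are consistent with the paper's statement.
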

  \begin{proof}
 The results follows by direct computation using \eqref{gokan} \eqref{shift}.
  \end{proof}
 \begin{rem}
 The equations for the shift of the other variables 
 \begin{equation}\label{appui}
  T_{x_i}  \overrightarrow{\Psi}=\overrightarrow{\Psi}A_i,
  \end{equation}
  for $i=1,\cdots, M-1$
 can be derived from \eqref{appu} and the action of $\{ \sigma_i\}$ in Proposition \ref{prop:gokan}. For example, the equation for the shift of the variable $x_{M-1}$ is obtained as follows
 \begin{equation}
 T_{x_{M-1}}\overrightarrow{\Psi}=\sigma_{M-1}T_{x_M}\sigma_{M-1}\overrightarrow{\Psi}=\overrightarrow{\Psi}R_{M-1}\cdot \sigma_{M-1}(A_M) \cdot \sigma_{M-1}(R_{M-1}(qz)).
 \end{equation}
By construction, the coefficient matrices in \eqref{appu} and \eqref{appui} 
satisfy a compatibility condition
\begin{equation}
A_i (T_{x_i} A_j)=A_j(T_{x_j}A_i).
\end{equation}
\end{rem} 
 \section{A reduction to $(N+1)\times (N+1)$
  form}\label{redn}
 In this section we reduce the equation \eqref{appui} into $(N+1)\times (N+1)$  form. 
To do this, we specialize the parameter $a_M$ to be $1$. Then the integrand \eqref{integrand}, and  hence 
$\Psi_{j,i}$ $(1 \leq j \leq N, 1\leq i \leq M-1)$
 and $\Psi_0$, become independent of $x_M$.
  Therefore we can consider 
  $\Psi_{j,i}$ $(1\leq j \leq N, 1\leq i \leq M-1)$
   as $\Psi_0$ times 
   $r_{j,i}$, where 
   $r_{j,i}$
 is a rational function in $x_0$, $\cdots$, $x_{M-1}$. The explicit form of $r_{j,i}$ is as follows (see \eqref{psiint})
 \begin{equation}\label{rji}
 r_{j,i}=\frac{1-b_j}{1-q^{\gamma_j}}\prod_{l=1}^j \frac{1-q^{\gamma_l}}{1-b_l q^{\gamma_l}}\prod_{k=1}^N \frac{(q^{\gamma_k},b_k)_{\infty}}{(b_k q^{\gamma_k},q)_{\infty}}\frac{T_{a_i}T_{x_1}T_{x_2}\cdots T_{x_{i-1}} T_{b_j}\mathcal{F}}{\mathcal{F}}
 \end{equation}
 \begin{equation}
   \mathcal{F}=\mathcal{F}_{N,M-1}\left( {\{ q^{\gamma_j} \} ,\{a_i \} \atop \{b_j q^{\gamma_j} \}};\{x_i \} \right).\quad 
    \end{equation}
  \begin{thm}\label{thm:laxj}
Specializing $a_M=1$ and setting $z=x_M$, $t=x_{M-1}$ and 
$\Psi_{j,i}=r_{j,i} \Psi_0\quad(1\leq j \leq N, 1\leq i \leq M-1)$,
the equations in Theorem \ref{thm:pfaff} can be reduced as 
\begin{equation}
  \begin{cases}\label{2kake2j}
  T_z \overrightarrow{\Psi}^{\rm red}=\overrightarrow{\Psi}^{\rm red} \displaystyle\biggl( \prod _{i=1}^{M-1} \frac{z-a_{M-i}x_{M-i}}{z-x_{M-i}}X_{2(M-i)-1} {X_{2(M-i)}}^{-1} \biggr){X_{2M-1}}^{-1}X_{2M}D_1,\\
  T_t \overrightarrow{\Psi}^{\rm red} = \overrightarrow{\Psi}^{\rm red}\displaystyle\frac{z-qt}{z-a_{M-1}qt}
  X_{2M-3}(z/q) {X_{2(M-1)}}^{-1}(z/q)D_2,
  \end{cases}
  \end{equation}
  where $\overrightarrow{\Psi}^{\rm red}$ is a vector with $N+1$ components:
   $\overrightarrow{\Psi}^{\rm red}=[\Psi_0^{\rm red},\Psi_{1,M}^{\rm red},\Psi_{2,M}^{\rm red},\cdots,\Psi_{N,M}^{\rm red}]$, $\Psi_{j,M}^{\rm red}=\Psi_{j,M}|_{a_M=1}$, 
  \begin{equation}
  \begin{array}{c}
D_1=\mathrm{diag}[1,q^{-\sum_{j=1}^N \gamma_j} ,q^{-\sum_{j=2}^N \gamma_j},\cdots,q^{-\gamma_N}], \quad
D_2=\mathrm{diag}[d_2,1,\cdots,1], \vspace{6pt}\\
X_k=\begin{bmatrix}
{u_{1,k}}&1&&\\
&{u_{2,k}}&\ddots&\\
&&\ddots&1\\
z&&&{u_{N+1,k}}
\end{bmatrix},\vspace{6pt}\\
\prod_{j=1}^{N+1}{u_{j,2i-1}}
=(-1)^Nx_{M-i},\quad \prod_{j=1}^{N+1}{u_{j,2i}}=
(-1)^Na_{M-i}x_{M-i}.
\end{array}
\end{equation}
Here $\{u_{j,i}\}$ are  independent of $z$ and they are given as a rational functions in $\{r_{j,i}\}$. Explicit forms of $\{u_{j,i} \}$ are given in the proof.
  \end{thm}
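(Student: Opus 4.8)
The plan is to push the explicit $(MN+1)$-dimensional system of Theorem~\ref{thm:pfaff} through the specialization $a_M=1$, restrict it to the $(N+1)$-dimensional subspace spanned by $\Psi_0,\Psi_{1,M},\dots,\Psi_{N,M}$, and then recognize the reduced coefficient matrices as the bidiagonal products displayed in \eqref{2kake2j}, reading off the unknowns $u_{j,i}$ by comparing matrix entries.

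First I would perform the reduction. Setting $a_M=1$ kills the factor $(a_Mx_Mu_N)_\infty/(x_Mu_N)_\infty$ in the integrand \eqref{integrand}, so $\Phi$, and with it $\Psi_0=\langle\Phi\rangle$ and every $\Psi_{j,i}$ with $i\le M-1$ (whose weight $p_{j,i}$ involves only $x_1,\dots,x_{M-1}$), become independent of $z=x_M$. Hence $\Psi_{j,i}=r_{j,i}\Psi_0$ for $i\le M-1$, with $r_{j,i}$ the contiguity ratio of $\mathcal F_{N,M-1}$ from \eqref{rji}, and only $\Psi_0$ and the $\Psi_{j,M}$ keep their $z$-dependence. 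Substituting these relations into the operator rules \eqref{ms}, \eqref{sgm}, I would check that the span of $\{\Psi_0,\Psi_{1,M},\dots,\Psi_{N,M}\}$ is stable under $T_z=T_{x_M}=\sigma_{M-1}\cdots\sigma_1\mu$ and, after the conjugation $T_{x_{M-1}}=\sigma_{M-1}T_{x_M}\sigma_{M-1}$ of the remark following Theorem~\ref{thm:pfaff}, under $T_t=T_{x_{M-1}}$. This gives closed $(N+1)\times(N+1)$ systems $T_z\overrightarrow\Psi^{\rm red}=\overrightarrow\Psi^{\rm red}\widetilde A(z)$ and $T_t\overrightarrow\Psi^{\rm red}=\overrightarrow\Psi^{\rm red}\widetilde B(z,t)$.

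The core step is to factor $\widetilde A(z)$. Since $T_{x_M}$ is a product of the $M$ elementary operators $\mu,\sigma_1,\dots,\sigma_{M-1}$, I would apply them one at a time and, after each, eliminate all $\Psi_{j,i}$ with $i\le M-1$ using $r_{j,i}$. I expect each $\sigma_i$-factor to collapse to a scalar multiple $\tfrac{z-a_{M-i}x_{M-i}}{z-x_{M-i}}X_{2(M-i)-1}X_{2(M-i)}^{-1}$ and the $\mu$-factor (which cyclically shifts the second index and wraps $\Psi_{j,M}$ onto $\Psi_0$ and the $\Psi_{k,1}$, while shifting $x_1\mapsto qx_1$, i.e.\ $\rho(x_1)=x_M$) to $X_{2M-1}^{-1}X_{2M}D_1$, where $X_k$ is the announced bidiagonal matrix; the bulk of the work is verifying that the $z$-dependence of each collapsed factor lives only in the simple pole at $z=x_{M-i}$ and in the single corner entry of each $X_k$. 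The entries $u_{j,k}$ are then defined by matching the diagonal, super-diagonal and corner entries, producing the rational expressions in the $r_{j,i}$ claimed in the statement. For the $t$-equation one conjugates $\widetilde A$ (evaluated at the shifted argument $z/q$) by the reduced form of $\sigma_{M-1}$, which at $a_M=1$ intertwines $\Psi_{k,M}$ with $\Psi_{k,M-1}=r_{k,M-1}\Psi_0$, giving $\tfrac{z-qt}{z-a_{M-1}qt}X_{2M-3}(z/q)X_{2(M-1)}^{-1}(z/q)D_2$. Finally the normalizations $\prod_j u_{j,2i-1}=(-1)^Nx_{M-i}$ and $\prod_j u_{j,2i}=(-1)^Na_{M-i}x_{M-i}$ are obtained by comparing the $z$-zeros and $z$-poles of $\det\widetilde A(z)$ (which are forced by the scalar prefactors) with $\prod_k\det X_k$, using that $\det X_k=\prod_j u_{j,k}+(-1)^Nz$ carries $z$ only in the additive term.

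The hard part is the factorization step: after the substitution $\Psi_{j,i}=r_{j,i}\Psi_0$ one must check that the reduced elementary factors really are bidiagonal of the prescribed shape rather than full $(N+1)\times(N+1)$ matrices, and then solve the resulting nonlinear relations for the $u_{j,k}$ in closed form. This forces one to use the precise structure of the contiguity coefficients $r_{j,i}$ of $\mathcal F_{N,M-1}$; the collapse is really a compatibility statement between the Pfaffian structure of Theorem~\ref{thm:pfaff} and the contiguity relations of the hypergeometric series, and a generic substitution would not produce the bidiagonal form.
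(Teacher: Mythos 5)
Your proposal follows essentially the same route as the paper: the author likewise decomposes $T_{x_M}=\sigma_{M-1}\cdots\sigma_1\mu$ into elementary operators, represents each on the intermediate bases $\overrightarrow{\Psi}^{(i)}=[\Psi_0,\Psi_{1,i},\dots,\Psi_{N,i}]$ using $r_{j,i}=\Psi_{j,i}/\Psi_0$, writes each representation matrix $R_i$ (and $Q$ for $\mu$) explicitly as a scalar prefactor times a product of a bidiagonal-with-corner matrix and an inverse of another, and obtains the $T_t$ equation from $T_t=\sigma_{M-1}T_z\sigma_{M-1}$. The explicit factored forms of $Q$ and $R_i$ in the paper are exactly the collapsed bidiagonal factors you anticipate, so your outline matches the published argument.
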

  \begin{pf}
  When $a_M=1$, obviously we have $T_z (\Psi_0)=\Psi_0$. 
  We will compute 
  $T_z(\Psi_{j,M})$ $(1\leq j \leq N)$.
In  the equation \eqref{appu}, 
we have 
\begin{equation}\label{zup}
\begin{array}{rll}
T_z \overrightarrow{\Psi}^{\rm red}=&\rho \mu \overrightarrow{\Psi}^{\rm red},\\
=&\rho (\overrightarrow{\Psi}^{(1)} Q),\\
=&\sigma_{M-1}\sigma_{M-2}\cdots \sigma_2  (\overrightarrow{\Psi}^{(2)}R_1Q^1),\\
=&\sigma_{M-1}\sigma_{M-2}\cdots \sigma_3  (\overrightarrow{\Psi}^{(3)}R_2R_1^2Q^1)\\
\vdots\\
=&\overrightarrow{\Psi}^{(M)}R_{M-1}R_{M-2}^{M-1}R_{M-3}^{M-1,M-2}\cdots R_1^{M-1,M-2,\cdots,2}\rho(Q),
\end{array}
\end{equation} 
where we put a vector $\overrightarrow{\Psi}^{(i)}:=\begin{bmatrix}\Psi_0&\Psi_{1,i}&\Psi_{2,i}&\cdots &\Psi _{N,i}\end{bmatrix}$ $(1\leq i \leq M)$ and $(*)^{k_1,k_2,\cdots,k_l}$ stands for $\sigma_{k_l}\sigma_{k_{l-1}}\cdots \sigma_{k_1}(*)$. 
The matrices $Q$ and $R_i$ come from the equations \eqref{ms} and \eqref{sgm}. 
The matrix $Q$ is a representation matrix of the transformation $\mu$ for the vector $\overrightarrow{\Psi}^{\rm red}$ and $\overrightarrow{\Psi}^{(1)}$. By the equation \eqref{ms}, we have
\begin{equation}\label{muq}
\begin{array}{rl}
\mu \overrightarrow{\Psi}^{\rm red}=\overrightarrow{\Psi}^{(1)}Q=&\overrightarrow{\Psi}^{(1)} 
{\scriptsize \begin{bmatrix}1&1&&&&\\&-b_1&\ddots&&&\\&&-b_2&\ddots&&\\&&&\ddots&\ddots&\\&&&&-b_{N-1}&1\\a_1x_1&&&&&b_N\end{bmatrix}^{-1}}
\begin{bmatrix}1&1&&&\\&-1&\ddots&&\\&&\ddots&&\\&&&-1&1\\x_1&&&&1\end{bmatrix}\\
&\quad  \times {\rm diag} [1,q^{-\sum_{j=1}^N \gamma_j},q^{-\sum_{j=2}^N \gamma_j},\cdots,q^{-\gamma_N}].
\end{array}
\end{equation}
The matrix $R_i$ is a representation matrix of the transformation $\sigma_i$ for the vector $\overrightarrow{\Psi}^{(i)}$ and $\overrightarrow{\Psi}^{(i+1)}$. 
In the first equation of the equation \eqref{sgm}, 
we put $r_{j,i}:=\Psi_{j,i}/\Psi_0$ and we have
\begin{equation}
\begin{array} {rl}
\sigma_i \overrightarrow{\Psi}^{(i)}&=\overrightarrow{\Psi}^{(i+1)} R_i\\
&=\overrightarrow{\Psi}^{(i+1)} \displaystyle\frac{a_ix_i-a_{i+1}x_{i+1}}{x_i-a_{i+1}x_{i+1}}\\
&\times {\scriptsize \begin{bmatrix}-x_{i}r_{N,i}&1\\
&-\frac{1}{r_{1,i}}&\ddots&\\
&&-\frac{r_{1,i}}{r_{2,i}}&\ddots&\\
&&&\ddots&&\\
&&&&1\\
a_{i+1}x_{i+1}&&&&-\frac{r_{N-1,i}}{r_{N,i}}\end{bmatrix}
\begin{bmatrix}
-a_ix_ir_{N,i}&1\\
&-\frac{1}{r_{1,i}}&\ddots\\
&&-\frac{r_{1,i}}{r_{2,i}}&\ddots\\
&&&\ddots&\hspace{-18pt}\ddots\\
&&&&1\\
a_{i+1}x_{i+1}&&&&-\frac{r_{N-1,i}}{r_{N,i}}
\end{bmatrix}^{-1}}
\end{array}
\end{equation}
  The second equation for $T_{t} \overrightarrow{\Psi}$ is obtained similarly by considering $T_{t}=\sigma_{M-1} T_z \sigma_{M-1}$. In fact, by the equations \eqref{0} and \eqref{gokan},  we have 
  \begin{equation}\label{te}
  \begin{array}{rl}
  T_t \overrightarrow{\Psi}&= \overrightarrow{\Psi} 
  \displaystyle\frac{z-qt}{z-qa_{M-1}t}
  \begin{bmatrix}
  v_1\frac{z-q a_{M-1}t}{z-qt}&\frac{v_2}{z-qt}&\cdots&\frac{v_{N+1}}{z-qt}\\
  &1&&\\
  &&\ddots&\\
  &&&1
  \end{bmatrix}\\
  &=\overrightarrow{\Psi} \displaystyle\frac{z-qt}{z-qa_{M-1}t}
\begin{bmatrix}
v_{1,1}&1&& \\
     &v_{2,1}&\ddots&\\
     &      &\ddots& 1\\
z/q&      &          &v_{N+1,1}
  \end{bmatrix}
  \begin{bmatrix}
v_{1,2}&1&& \\
     &v_{2,2}&\ddots&\\
     &      &\ddots& 1\\
z/q&      &          &v_{N+1,2}
  \end{bmatrix}^{-1}
  \begin{bmatrix}
  d_1'& &           &\\
       &1&          &\\
       &  &\ddots&\\
       &  &          &1
  \end{bmatrix}\\
  &=\overrightarrow{\Psi}\displaystyle\frac{z-qt}{z-qa_{M-1}t}{X'}_{2M-3}(z/q){X'}_{2(M-1)}^{-1}(z/q)D_2',
  \end{array}
  \end{equation}
  where 
  $\prod_{j=1}^{N+1}v_{j,1}=ta_{M-1}$, $\prod_{j=1}^{N+1}v_{j,2}=t$. And  $v_{j,i}$ are rational functions which are independent of $z$.
  \qed
  \end{pf}
\begin{rem}
The equations  \eqref{zup},  \eqref{te}  satisfy a compatibility condition by construction. Therefore, the rational function $r_{j,i}$ satisfies certain difference equation. For instance, in case of $(M,N)=(2,2)$, we have
\begin{equation}
\begin{array}{c}
\overline{r_{2,1}}=-\displaystyle\frac{q^{-\gamma _2} \left(a_1 t r_{2,1}+b_2
   r_{1,1}-b_1 r_{2,1}-r_{1,1}-b_2+1\right)}{-a_1 b_1 t r_{2,1}-a_1 t
   r_{1,1}+b_1 t r_{2,1}+t r_{1,1}+b_1 b_2-t},\\
   \overline{r_{1,1}}=
   \displaystyle\frac{q^{-\gamma _1-\gamma _2} \left(-a_1 b_1 t r_{2,1}-a_1 t
   r_{1,1}+a_1 t r_{2,1}+b_2 r_{1,1}+b_1 b_2-b_2\right)}{-a_1 b_1 t
   r_{2,1}-a_1 t r_{1,1}+b_1 t r_{2,1}+t r_{1,1}+b_1 b_2-t}.
   \end{array}
   \end{equation}
   These equations are interpreted as multivariable Riccatti equations for the special solution of the corresponding system $\mathcal{P}_{3,(2,2)}$. The solution is given in \eqref{rji}.
   \end{rem}
   In order to relate the above results to the monodromy preserving deformations $\mathcal{P}_{N+1,(M_,M+)}$, we specify the deformation equation as follows. First we choose $\vec{\varepsilon}$ as $\vec{\varepsilon}=(+,-,+,-,\cdots,+,-,-,+)$. Namely
   we consider a linear equation \eqref{lax}
\begin{equation}\label{laxa}
\begin{array}{c}
\Psi (qz)=\Psi(z)A(z), \ A(z)=DX_1(z)X_2^{-1}(z)\cdots X_{2M-3}(z)X_{2M-2}^{-1}(z)X_{2M-1}(z)^{-1}X_{2M}(z).\\
\end{array}
\end{equation}
   As a deformation of this, 
   we consider the  following equation
\begin{equation}\label{deb}
\overline{\Psi}=\Psi B(z),\quad B(z)=X_{2M}(z/q)X_{2M-1}^{-1}(z/q),
\end{equation}
where the  shift of parameters are given by 
\begin{equation}
\begin{array}{l}
\overline{a_k}=a_k\quad (k=1,2,\cdots,M),\\
\overline{b_j}=b_j\quad (j=1,2,\cdots,N),\\
\overline{c_l}= c_l\quad (l=1,2,\cdots,2M-2), \quad \overline{c_i}= q c_i\quad  (i=2M-1,2M),\\
\overline{d_j}=d_j \quad (j=1,2,\cdots,N).
\end{array}
\end{equation}
We denote by $\mathcal{P}_{N+1,(M,M)}$ the nonlinear equation arising as the comatibility condition $A(z)B(qz)=B(z)\overline{A(z)}$ of \eqref{laxa} \eqref{deb}. Then we have
    \begin{thm}
    Under the specialization $d_1=1$, 
  the nonlinear equation $\mathcal{P}_{N+1,(M,M)}$ \eqref{lax} admits a particular solution given in terms of a generalized hypergeometric functions $\mathcal{F}_{N,M-1}$ as follows.
    \begin{equation}\label{usol}
    \begin{array}{rl}
    &\begin{cases}
    u_{1,2(M-i)-1}=-\displaystyle\frac{r_{N,i}x_id_1}{d_{N+1}},\\
    u_{1,2(M-i)}=-\displaystyle\frac{r_{N,i}a_ix_id_1}{d_{N+1}},\\
    u_{1,2M-1}=\displaystyle\frac{d_1}{d_{N+1}},\\
    u_{1,2M}=\displaystyle\frac{d_1}{d_{N+1}},
    \end{cases} \left(1\leq i \leq M\right)\vspace{6pt}\\
    &\begin{cases}
    u_{j,2(M-i)-1}=-\displaystyle\frac{r_{j-2,i}d_j}{r_{j-1,i}d_{j-1}},\\
    u_{j,2(M-i)}=-\displaystyle\frac{r_{j-2,i}d_j}{r_{j-1,i}d_{j-1}},\\
    u_{j,2M-1}=-\displaystyle\frac{b_{j-1}d_{j}}{d_{j-1}},\\
    u_{j,2M}=-\displaystyle\frac{d_{j}}{d_{j-1}}, 
    \end{cases}\left({1\leq i \leq M\atop 2\leq j \leq N}\right)\vspace{6pt}\\
    &\begin{cases}
     u_{N+1,2(M-i)-1}=-\displaystyle\frac{r_{N+1-2,i}d_{N+1}}{r_{N,i}d_{N}},\\
    u_{N+1,2(M-i)}=-\displaystyle\frac{r_{N-1,i}d_{N+1}}{r_{N,i}d_{N}},\\
    u_{N+1,2M-1}=-\displaystyle\frac{b_{N}d_{N+1}}{d_{N}},\\
    u_{N+1,2M}=\displaystyle\frac{d_1}{d_N}
    \end{cases}(1\leq i \leq M),
    \end{array}
    \end{equation}
    and     \begin{equation}
    \begin{array}{c}
    c_{2i-1}=(-1)^Nx_{M-i}, c_{2i}=(-1)^N a_{M-i} x_{M-i}, \\
      d_1=1,d_k=q^{-\sum_{j=k-1}^N\gamma_j} \quad (1 \leq i \leq M, 2\leq k \leq N+1),
    \end{array}
    \end{equation} 
      where $r_{j,i}$ is a ratio of the hypergeometric function as given \eqref{rji}
  \begin{equation}
 r_{j,i}=\frac{1-b_j}{1-q^{\gamma_j}}\prod_{l=1}^j \frac{1-q^{\gamma_l}}{1-b_l q^{\gamma_l}}\prod_{k=1}^N \frac{(q^{\gamma_k},b_k)_{\infty}}{(b_k q^{\gamma_k},q)_{\infty}}\frac{T_{a_i}T_{x_1}T_{x_2}\cdots T_{x_{i-1}} T_{b_j}\mathcal{F}}{\mathcal{F}}
  \end{equation}
   and the function $\mathcal{F}$ stands for 
   $\mathcal{F}=\mathcal{F}_{N,M-1}\Bigl( {\{ q^{\gamma_j} \} ,\{a_i \} \atop \{b_j q^{\gamma_j} \}};\{x_i \} \Bigr)$
    $(1\leq j \leq N, 1 \leq i \leq M-1)$.
%
  \end{thm}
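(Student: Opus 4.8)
\emph{Plan of proof.} The plan is to recognize the asserted solution as nothing but the reduced linear system of Theorem~\ref{thm:laxj} rewritten in the normal form of Definition~2.1. Concretely, the vector $\overrightarrow{\Psi}^{\rm red}$, whose entries are built from $\mathcal{F}_{N,M-1}$ through the ratios $r_{j,i}$ of \eqref{rji}, becomes --- after a harmless gauge change --- a solution of the Lax pair \eqref{laxa}--\eqref{deb}, and \eqref{usol} merely records the resulting values of the dependent variables $u_{j,i}$. So I would first specialize $a_M=1$ and $d_1=1$ and invoke Theorem~\ref{thm:laxj}: with $z=x_M$ and $t=x_{M-1}$, the row vector $\overrightarrow{\Psi}^{\rm red}=[\Psi_0^{\rm red},\Psi_{1,M}^{\rm red},\cdots,\Psi_{N,M}^{\rm red}]$ satisfies the two equations of \eqref{2kake2j}, whose coefficients are products of matrices $X_k=\mathrm{diag}[u_{\bullet,k}]+\Lambda$ with entries that are $z$-free rational functions of $\{r_{j,i}\}$.

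The bulk of the argument is to reshape \eqref{2kake2j} into the exact form defining $\mathcal{P}_{N+1,(M,M)}$, in three steps. First, the scalar prefactors $\prod_{i=1}^{M-1}\frac{z-a_{M-i}x_{M-i}}{z-x_{M-i}}$ and $\frac{z-qt}{z-a_{M-1}qt}$ are ratios of products of linear factors $z-c$, hence of the shape $\varphi(qz)/\varphi(z)$, respectively $\psi(qt)/\psi(t)$, for suitable finite products of $q$-shifted factors; replacing $\overrightarrow{\Psi}^{\rm red}$ by $\varphi(z)^{-1}\psi(t)^{-1}\overrightarrow{\Psi}^{\rm red}$ clears them, the two clearings being mutually consistent since \eqref{2kake2j} is a compatible Pfaffian system. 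Second, the factors $X_k^{\pm1}$ occur in \eqref{2kake2j} in a different order than in \eqref{laxa}; I would reorder them using the permutation identity \eqref{xpmt}, which modifies the $u$-variables in a controlled way. Third, since $d_1=1$ the matrix $D_1$ of Theorem~\ref{thm:laxj} equals $D=\mathrm{diag}[d_1,\cdots,d_{N+1}]$; conjugating the matrix product by $D$ carries it from the right end to the left (each $X_k$ becoming $D^{-1}X_kD$, whose superdiagonal entries acquire factors $d_{j+1}/d_j$ and whose corner entry becomes $z\,d_1/d_{N+1}$), after which a fixed rescaling of rows and columns restores the normalized shape of Definition~2.1. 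Carrying out this bookkeeping turns the $T_z$-coefficient into $DX_1X_2^{-1}\cdots X_{2M-1}^{-1}X_{2M}$ and the $T_t$-coefficient into $X_{2M}(z/q)X_{2M-1}^{-1}(z/q)$, with the dependent variables now equal to those in \eqref{usol} --- the $d_j/d_{j-1}$ and $b_{j-1}$ factors there being exactly the traces of the conjugation and the rescaling. One then verifies directly from \eqref{usol} the determinant normalizations $\prod_j u_{j,2i-1}=c_{2i-1}$ and $\prod_j u_{j,2i}=c_{2i}$ (up to sign) and reads off the parameter shifts declared before the theorem.

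Finally, the gauge-transformed vector solves the linear pair \eqref{laxa}--\eqref{deb} with these parameters and with $\{u_{j,i}\}$ given by \eqref{usol}. To pass from a single vector solution to the nonlinear system, I would complete it to a fundamental $(N+1)\times(N+1)$ matrix solution --- for instance by adjoining the solutions of the same Jackson integral with the exponents $\gamma_j$ shifted by integers, which are linearly independent for generic parameters --- so that $T_tT_z=T_zT_t$ applied to this matrix forces $A(z)B(qz)=B(z)\overline{A(z)}$; this is by definition the system $\mathcal{P}_{N+1,(M,M)}$, so \eqref{usol} is a solution of it, expressed through $\mathcal{F}_{N,M-1}$. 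Alternatively, as in the case $(N,M)=(1,M)$ of \cite{park18}, one can avoid the fundamental matrix and check $A(z)B(qz)=B(z)\overline{A(z)}$ as a rational identity in the $r_{j,i}$, using the Riccati equations they satisfy (see the remark following Theorem~\ref{thm:laxj}).

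I expect the main obstacle to be the second step: organizing the reordering \eqref{xpmt}, the diagonal conjugation, and the row/column rescalings so that, simultaneously for both equations of the pair, one lands precisely on the normalized shape of Definition~2.1 --- the $1$'s on the superdiagonal, the single $z$ in the corner, and the entries matching \eqref{usol} --- while keeping all determinant conditions intact. Tracking the indices carefully through this is the delicate point, and it is also where one must confirm that the two scalar gauges are consistent.
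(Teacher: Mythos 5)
Your proposal follows essentially the same route as the paper's own proof: start from the reduced Pfaffian system of Theorem \ref{thm:laxj}, remove the scalar prefactors by a gauge transformation (the paper uses $\widehat{\Psi}=\prod_{i=1}^{M-1}\frac{(a_ix_iq/z)_{\infty}}{(x_iq/z)_{\infty}}\Psi$), pass the diagonal matrix $D$ through the product of the $X_k^{\pm1}$ (the paper's $\delta^{\pm,\mp}_{k,l}$ maps, which rescale $u_{j,k}\mapsto\frac{d_{j-1}}{d_j}u_{j,k}$ exactly as your conjugation-plus-rescaling step does), and read off \eqref{usol} by comparing coefficients. Your extra remarks on completing to a fundamental matrix and on invoking \eqref{xpmt} are harmless additions rather than a different method, so the proposal is correct and matches the paper's argument.
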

  \begin{proof}
Then, the system of equations  \eqref{2kake2j} is a  specialization of the equation \eqref{laxa} via a gauge transformation 
\begin{equation}
\widehat{\Psi}=\prod_{i=1}^{M-1}\frac{(a_{i}x_{i}q/z)_{\infty}}{(x_{i}q/z)_{\infty}}\Psi.
\end{equation}
In more detail, we define a transformation $\delta^{(\varepsilon_i,\varepsilon_{i+1})}_{k,l}$ of variables $\{ u_{j,k} \}$ and $\{ d_j\}$  
  such that 
  \begin{equation}
  \delta^{(\varepsilon_k,\varepsilon_l)}_{k,l}(DX_{k}^{\varepsilon_k}(z)X_{l}^{\varepsilon_l}(z))=X_k^{\varepsilon_k}(z) X_{l}^{\varepsilon_l}(z)D.
  \end{equation}
   This translation is uniquely given as follows
  \begin{equation}\label{dxx}
  \begin{array}{c}
  \delta^{\pm,\mp}_{k,l}(u_{j,k},u_{j,l},d_j)=(\displaystyle\frac{d_{j-1}}{d_j}u_{j,k},\displaystyle\frac{d_{j-1}}{d_j}u_{j,l},d_j).
  \end{array}
  \end{equation}
  We apply the transformation \eqref{dxx} on the equation \eqref{laxa} repeatedly. Then, the following equation holds
  \begin{equation}\label{xddx}
  \begin{array}{c}
  \delta_{2M-1,2M}^{-,+}\delta_{2M-3,2M-2}^{+,-}\cdots \delta_{2,3}^{+,-}\delta_{1,2}^{+,-}(DX_1X_2^{-1}\cdots X_{2M-3}X_{2M-2}^{-1}X_{2M-1}^{-1}X_{2M})\\
  =Y_1{Y_2^{-1}}\cdots Y_{2M-3}{Y_{2M-2}^{-1}}{Y_{2M-1}^{-1}}Y_{2M}D,
  \end{array}
  \end{equation}
  where a matrix $Y_i$ stands for  a matrix $X_i$ whose variable $u_{j,i}$ replaced $y_{j,i}=\displaystyle\frac{d_{j-1}}{d_j} u_{j,i}$. 
  Comparing the coefficients of the right-hand side of the equation \eqref{xddx} and of the first equation of the equation \eqref{2kake2j}, we obtain the following result \eqref{usol}
    where $d_1=1,d_k=q^{-\sum_{j=k-1}^N\gamma_j}\quad (2\leq k \leq N+1)$, $r_{0,i}=1$.
%
  We recall that the rational function $r_{j,i}$ is given as in \eqref{rji}
    \end{proof}
  \begin{rem}
   Theorem \ref{thm:laxj} gives a $q$-analogue of an argument to reduce a rank of a linear Pfaffian system in section 5 of \cite{tsuda15}.
  \end{rem}
  \begin{rem}\label{nmgar}
The system \eqref{lax} for $\mathcal{P}_{2,(M_-,M_+)}$ $(M_-+M_+=2M)$ is equivalent to the Lax equation for $2(M-1)$ dimensional $q$-Garnier system \cite{ny17} \cite{sakai05}  via the following gauge transformation
 \begin{equation}
 \widehat{\Psi}=\displaystyle\prod_{i=1}^{M_-}  
\left(\frac{(qc_i/z)_{\infty}}{(-z)_{\infty}(-q/z)_{\infty}}\right)
\Psi.
 \end{equation}
\end{rem}
\section*{Acknowledgement}
The author would like to express her gratitude to Professor Yasuhiko Yamada for valuable suggestions
and encouragement. 
She also thanks supports from JSPS KAKENHI Grant Numbers 17H06127 and 26287018 for the travel expenses in accomplishing this study. 
  

\begin{thebibliography}{99}
  \bibitem{And72} Andrews, George, E. (1972) Summations and transformations for basic Appell series, {\it J. London math. Soc.}, {\bf 4}, 618--622.
  \bibitem{gr04} Gasper, G. \& Rahman, M. (2004) {\it Basic hypergeometric series}, vol.98, 2nd edn. Cambridge: Cambridge University Press, pp. 3.
    \bibitem{kn03} Kajihara, Y.  \& Noumi, M. (2003) Multiple elliptic hypergeometric series. An approach Cauchy determinant, {\it Indag. Mathem., N.S.}, {\bf 14}, 395--421.
     \bibitem{kny02} Kajiwara, K., Noumi, M. \& Yamada, Y. (2002) $q$-Painlev{\'e} systems arising from $q$-KP hierarchy, {\it Lett. Math. Phys.}, {\bf 62}, 259--268.
    \bibitem{matsuo93} Matsuo, A.  (1993) Jackson integrals of Jordan-Pochhammer type and quantum Knizhnik-Zamolod-
chikov equations, {\it Commun. Math. Phys.}, {\bf 151}, 263--273.
    \bibitem{mimachi89} Mimachi, K. (1989) Connection problem in holonomic $q$-difference system associated with a Jackson integral of Jordan-Pochhammer type, {\it Nagoya. Math. J.}, {\bf 116}, 149--161.
    \bibitem{mimachi96} Mimachi, K. (1996) A solution to quantum Knizhnik-Zamolodchikov equations and its application to eigenvalue problems of the Macdonald type, {\it Duke Math. J.}, {\bf 85}, 635--658.
    \bibitem{ny17} Nagao, H. \& Yamada, Y. (2018)  Variations of $q$-Garnier system, {\it  J. Phys. A: Math. Theor.}, {\bf51}, 135204 (19pp).
         \bibitem{or16}  Ormerod, C. M.   and Rains, E. M., (2016) Commutation relations and discrete Garnier systems, {\it SIGMA} {\bf 12}, 110.
         \bibitem{park18} Park, K. (2018) A certain generalization of $q$
-hypergeometric functions and their related monodromy preserving deformation, {\it Journal of Integrable Systems}, {\bf 3}, 1--14
    \bibitem{res92} Reshetikhin, N.Yu. (1992) {Jackson type integrals, Bethe vectors, and solutions to a difference analog of the Knizhnik-Zamolodchikov system}, {\it Lett. Math. Phys.}, {\bf 26}, 153--165.
    \bibitem{sakai05} Sakai, H. (2005) A $q$-Analog of the Garnier System, {\it Funkcialaj Ekvacioj}, {\bf 48}, 273--297.
\bibitem{suzuki15} Suzuki, T. (2015) A $q$-analogue of the Drinferd-Sokolov hierarchy of type $A$ and $q$-Painlev{\'e} system, {\it AMS Contemp. Math.}, {\bf 651}, 25--38.
\bibitem{tsuda12} Tsuda, T. (2012) Hypergeometric solution of a certain polynomial system of isomonodromy type, {\it Quart. J. Math.} {\bf 63}, 489--505.
\bibitem{tsuda15} Tsuda, T. (2015) On a fundamental system of solutions of a certain hypergeometric equation, {\it Ramanujan J.} {\bf 38}, 597--618.
 \end{thebibliography}
\end{document}